\documentclass[letterpaper, 10 pt, journal]{IEEEtran}
\IEEEoverridecommandlockouts

\usepackage{cite}
\usepackage{amsmath,amssymb,amsfonts,amsthm}
\allowdisplaybreaks[4]
\usepackage{algorithm}
\usepackage{algpseudocode}
\algrenewcommand\algorithmicrequire{\textbf{Input:}}
\algrenewcommand\algorithmicensure{\textbf{Output:}}
\usepackage{graphicx}
\usepackage{booktabs}
\usepackage{float}
\usepackage{multirow}

\newtheorem{proposition}{Proposition}
\usepackage{tikz}
\usetikzlibrary{backgrounds,arrows.meta,positioning,calc,shapes.geometric,fit}
\usepackage{hyperref}

\begin{document}

    \bstctlcite{BSTcontrol}

    \title{
        \LARGE \bfseries RFSR-MI-MPC: Ranking-Based Feasible-Set Restriction for Real-Time Control of Reconfigurable Battery Packs
    }
    
    \author{
        Albert~Škegro, \IEEEmembership{Student Member,~IEEE,} Quan~Ouyang, \IEEEmembership{Member,~IEEE,} Torsten~Wik, \IEEEmembership{Member,~IEEE,}
        Changfu~Zou, \IEEEmembership{Senior Member,~IEEE}
 
        \thanks{This work was supported by the Swedish Research Council (Grant No.~2023--04314) and the European Union's Horizon Europe programme through the Marie Sk\l{}odowska--Curie Actions (Grant No.~101131278). The computations were enabled by resources provided by the National Academic Infrastructure for Supercomputing in Sweden (NAISS) at the Chalmers University of Technology, partially funded by the Swedish Research Council (Grant No.~2022--06725).}%
        
        \thanks{Albert~Škegro, Torsten~Wik, and Changfu~Zou are with the Department of Electrical Engineering, Chalmers University of Technology, Gothenburg, Sweden. (e-mails: skegro@chalmers.se, tw@chalmers.se, changfu.zou@chalmers.se).}
        \thanks{Quan~Ouyang is with the College of Automation Engineering, Nanjing University of Aeronautics and Astronautics, Nanjing, China (e-mail: ouyangquan@nuaa.edu.cn)}
    }

    \markboth{Submitted to IEEE Transactions on Control Systems Technology}%
    {Škegro \MakeLowercase{\textit{et al.}}: RFSR-MI-MPC: Ranking-Based Feasible-Set Restriction}
    \maketitle

    \begin{abstract}
        A ranked-prefix feasible-set restriction is proposed for real-time mixed-integer model predictive control of series-connected reconfigurable battery packs with per-cell bypass switching. Cells are ordered at each sampling instant by a one-step objective-informed suitability score, and the engagement variables are constrained to prefixes of that ordering. The restriction reduces the number of admissible per-step engagement patterns from combinatorial to linear in the number of cells while leaving the model, objective, and physical constraints unchanged. For a 20-cell pack driven over a worldwide harmonized light vehicles test cycle, the proposed controller certifies every active step, solves every instance without branching, and reduces the 95th-percentile solver time by a factor of 6.8 relative to the full subset-selection controller. The cycle-mean cell-to-cell state-of-charge standard deviation decreases by a factor of 3.4, and no power curtailment occurs. Same-state reference solves show a mean restriction gap below 0.11\% on jointly certified steps, and the computational gains persist across the tested initial conditions, cell heterogeneity levels, pack sizes, and prediction horizons without retuning.
    \end{abstract}

    \begin{IEEEkeywords}
        Mixed-integer model predictive control, battery management systems, reconfigurable battery packs, computational reliability, feasible-set restriction
    \end{IEEEkeywords}

    \section{Introduction}
        
        Online mixed-integer model predictive control (MI-MPC) is a natural framework for constrained control problems in which continuous inputs must be chosen jointly with discrete operating decisions. Its appeal is clear: hybrid dynamics, logic conditions, physical limits, and performance objectives can be enforced in a single receding-horizon optimization problem~\cite{rawlings2020model,bemporad1999control,richards2005mixed}. Its practical limitation is equally clear. At each sampling instant, the controller must solve a mixed-integer program (MIP) within a fixed time budget. Under this limit, branch-and-bound may return a feasible incumbent without certifying optimality to the prescribed tolerance~\cite{marcucci2020warm,hespanhol2019structure}. The resulting instance-dependent time required for certification is a central obstacle to reliable real-time MI-MPC.
        
        Series-connected reconfigurable battery packs (RBPs) with per-cell bypass switching make this obstacle especially pronounced. By engaging or bypassing individual cells, such packs can improve balancing, fault tolerance, and utilization relative to fixed interconnections~\cite{han2020next,muhammad2019reconfigurable,ci2016reconfigurable}. The same flexibility, however, creates a large binary decision set. If an $N$-cell pack engages $m$ cells, then $\binom{N}{m}$ same-cardinality subsets are admissible. Under weak cell heterogeneity, many same-cardinality subsets have similar pack-level effects because cells with comparable SOC, temperature, resistance, and aging condition produce similar terminal voltages and predicted state responses. The computational burden therefore arises not only from the size of the feasible set, but also from the need to distinguish among many combinatorially distinct patterns that often provide little additional control authority. This near-equivalence is typical rather than guaranteed for every admissible subset. 

        Battery-pack management has traditionally relied on cell balancing to counter cell-to-cell variation. Passive balancing dissipates excess charge from higher-charge cells through resistive elements, and is simple and low-cost but energy-inefficient~\cite{hoque2017battery}. Active balancing instead redistributes charge among cells using switched-capacitor, inductor-based, or converter-based circuits, improving efficiency at the cost of additional hardware and control complexity~\cite{gallardo2014battery}. Reconfiguration, as introduced above, is a distinct and complementary paradigm to both: rather than redistributing charge within a fixed topology, it changes the pack's electrical topology itself. Prior work has shown that such reconfiguration can extend usable battery lifetime by over 20\%, particularly in high-voltage applications such as electric trucks and long-range passenger vehicles~\cite{vskegro2026system}.
        
        Existing work confirms the value of predictive optimization for reconfigurable batteries, but it also highlights the computational burden created by binary switching decisions. In modular or converter-integrated architectures, receding-horizon convex optimization has been used successfully when the dominant online decision is continuous power allocation among modules~\cite{farakhor2022novel,farakhor2023scalable}. For bypass-switched packs, however, predictive control faces a substantially harder subset-selection problem. Nonlinear MPC has been demonstrated on small reconfigurable packs with electrothermal constraints, using a 3~s prediction horizon and a solver time budget capped at the control sampling interval so that a feasible input is always available, at the cost of forgoing a global-optimality guarantee~\cite{mondoha2021nonlinear}. For larger commercial reconfigurable systems, mixed-integer MPC improves current-ripple suppression by 25\% relative to rule-based control (RBC), while state-of-charge (SOC) balancing, state-of-health (SOH) balancing, and reference tracking are comparable between the two; the RBC is at least 100 times faster~\cite{pinter2025comparative}.
        
        A different response in the literature is to reduce online complexity by adopting an alternative control paradigm that avoids solving the original mixed-integer predictive problem. Control allocation combined with control barrier functions can regulate output voltage, promote SOC balancing, and enforce electrothermal safety with low online cost~\cite{ebrahimi2026safe}. Reinforcement-learning and related learning-based controllers have also been proposed for real-time balancing, capacity utilization, and fault-tolerant reconfiguration, but typically do not provide per-step optimality certificates for the original MI-MPC formulation or guarantee hard constraint satisfaction, for example, cell over-voltage or over-temperature limits, since constraints are usually enforced only implicitly through the reward design~\cite{wei2026reliable,liu2026learning,irshayyid2026realtime}. These approaches are attractive when fast online evaluation is the dominant requirement, but they trade certified guarantees for computational simplicity.
        
        A separate line of work retains the original optimization formulation and instead improves the model or the solver. Unified modeling frameworks for reconfigurable battery systems represent switch-dependent interconnections in a control-oriented form and support mixed-integer optimal-control formulations~\cite{geng2025fundamental,geng2026unified}. Related search-space construction methods remove infeasible or topologically duplicate configurations, such as open circuits, short circuits, and invalid switch states~\cite{geng2025fundamental}. At the solver level, warm-start strategies and branch-and-bound reuse exploit the receding-horizon structure of MI-MPC, while structure-exploiting branch-and-bound methods leverage block-sparse optimal-control structure to reduce solution time~\cite{marcucci2020warm,hespanhol2019structure}. These advances are complementary, but they do not address a residual source of difficulty that remains in a bypass-switched series pack after infeasible and duplicate topologies have been excluded: the large number of feasible same-cardinality engagement subsets that a solver must still distinguish.
        
        This paper targets that residual formulation-level burden. We propose a ranking-based feasible-set restriction for MI-MPC, referred to as RFSR-MI-MPC. At each sampling instant, cells are ordered by an objective-derived sensitivity score, and only prefix selections in that ordering are admitted, replacing arbitrary same-cardinality subset selection with a ranked-prefix engagement set. In this work, the system model, ranking score, and restricted engagement sets are developed specifically for reconfigurable battery packs. The underlying restriction principle may also be applicable to structured MI-MPC problems in which discrete configurations can be ranked and represented by nested candidate subsets. Importantly, the restriction is designed to leave the model, objective, and constraints unchanged. The resulting controller remains a mixed-integer convex MPC scheme and provides optimality certificates for the restricted online problem, without claiming global optimality for the excluded non-prefix subsets. Overall, the major contributions of this paper are twofold:
        \begin{itemize}
            \item RFSR-MI-MPC, a ranked-prefix feasible-set restriction for structured MI-MPC, together with an optimality guarantee for the restricted problem under an affine one-step surrogate
            \item An instantiation of RFSR-MI-MPC for bypass-switched reconfigurable battery packs, comprising an objective-derived cell-ranking score and a restricted mixed-integer convex MPC formulation, evaluated in closed loop and validated against same-state full subset-selection reference solves.
        \end{itemize}        
        
        The remainder of the paper is organized as follows. Section~\ref{sec:modelAndProblemFormulation} presents the system model and MI-MPC formulation. Section~\ref{sec:restriction} introduces the combinatorial engagement structure, objective-informed ranking, ranked-prefix restriction, and its properties. Section~\ref{sec:evalMethodology} specifies the evaluation methodology, including the simulation and solver settings, performance metrics, and experimental protocols used to obtain the results. Section~\ref{sec:results} presents the closed-loop, reference-solve, Monte Carlo, and robustness results. Section~\ref{sec:conclusion} concludes the paper.

    \section{System Modeling and MI-MPC Formulation}
    \label{sec:modelAndProblemFormulation}

        We consider an $N$-unit series-connected reconfigurable battery pack, as shown in Fig.~\ref{fig:pack_topology}. Each unit $i$ contains one battery cell, one series switch whose on/off state is denoted by $S_i$, and one complementary bypass switch whose on/off state is denoted by $S^{\prime}_i$. At  discrete-time step $k$ with sampling time \(\Delta t\), the binary switching states satisfy
        \begin{subequations}
            \label{eq:switching_definition}
            \begin{align}
                S_i(k),\,S_i^{\prime}(k) &\in \{0,1\},\\
                S_i(k)+S_i^{\prime}(k) &= 1 \label{eq:S-binary-def},
            \end{align}
        \end{subequations}
        where \(S_i=1\) means that the series switch is on and cell $i$ is engaged in the series string, while \(S_i^{\prime}=1\) means that the bypass switch is on and cell \(i\) is bypassed. Interlocked gate signals enforce the complementary switching condition in \eqref{eq:S-binary-def}, which is treated as exact at the sampling instant $k$. The binary engagement vector is defined as
        \begin{align}
            S:=[S_1,\ldots,S_N]^{\top}.
        \end{align}
        
        \begin{figure*}[!ht]
            \centering
            \input{Figures/figureRec}
            \caption{Series-connected reconfigurable battery pack with per-cell bypass switching and first-order resistor-capacitor (RC)-pair equivalent-circuit model used for each cell.}
            \label{fig:pack_topology}
        \end{figure*}

        \subsection{Electrothermal Battery Model}
        \label{subsec:electhermBattModel}

            The electrothermal dynamics of each cell are described by a first-order resistor-capacitor (RC) equivalent circuit coupled to a lumped thermal model~\cite{vskegro2023analysis,ouyang2025mathematical,allafi_lumped_2018}. The state variables of cell \(i\) are its state of charge \(\mathrm{SOC}_i\), RC-branch current \(i_{\mathrm{RC},i}\), and temperature \(T_i\). Their discrete-time dynamics are given by
            \begin{subequations}
                \label{eq:cellDynamicsExpanded}
                \begin{align}
                    \mathrm{SOC}_i(k+1)
                    =\:&
                    \mathrm{SOC}_i(k)
                    -
                    \frac{\Delta t}{3600Q_i} i_i(k),                                        \\
                    i_{\mathrm{RC},i}(k+1)
                    =\:&
                    a_1 i_{\mathrm{RC},i}(k)
                    +
                    b_1 i_i(k),                                                          \\
                    T_i(k+1)
                    =\:&
                    T_{\mathrm{amb}}
                    +
                    a_T\left[T_i(k)-T_{\mathrm{amb}}\right]
                    \nonumber\\
                    & +
                    b_T \dot Q_i^{\mathrm{gen}}(k),
                    \label{eq:cellDynamics_Thermal}
                \end{align}
            \end{subequations}
            where \(Q_i\) is the cell capacity, \(i_i\) is the cell current, \(T_{\mathrm{amb}}\) is the ambient temperature, and \(\dot Q_i^{\mathrm{gen}}\) is the heat-generation rate. The current is defined as positive during discharge and negative during charge. The model coefficients \(a_1\), \(b_1\), \(a_T\), and \(b_T\) are treated as constants and are determined by the underlying electrical and thermal parameters together with the sampling time $\Delta t$. Their values for the battery considered in this study are provided in the accompanying archival repository~\cite{skegroRankedPrefixArchive}.
            
            The pack current is denoted by \(i_{\mathrm{pack}}\). Because an engaged cell carries the pack current whereas a bypassed cell carries no current, the cell current satisfies
            \begin{equation}
                i_i(k)=S_i(k)\,i_{\mathrm{pack}}(k).
                \label{eq:cellCurrent}
            \end{equation}

            The heat-generation rate \(\dot Q_i^{\mathrm{gen}}\) is modeled as
            \begin{equation}
                \begin{aligned}
                    \dot Q_i^{\mathrm{gen}}(k)
                    &=
                    R_0 i_i^2(k)
                    +
                    R_1 i_{\mathrm{RC},i}^2(k)
                    +
                    r_s i_{\mathrm{pack}}^2(k),
                \end{aligned}
                \label{eq:qi_gen}
            \end{equation}
            where \(R_0\) and \(R_1\) are the ohmic and RC-branch resistances, and \(r_s\) is the switch on-state resistance. The RC-branch term represents internal polarization loss and is therefore retained when the cell is bypassed. Because the pack current passes through one conducting switch in every cell unit, the switch-loss term is independent of \(S_i\). The switch-generated heat in each unit is included in the corresponding lumped cell-unit thermal model.
            
            The terminal voltage of cell \(i\) is
            \begin{equation}
                v_i(k)
                =
                v_{\mathrm{OC},i}\bigl(\mathrm{SOC}_i(k)\bigr)
                -
                R_0 i_i(k)
                -
                R_1 i_{\mathrm{RC},i}(k),
                \label{eq:cellVoltage}
            \end{equation}
            where \(v_{\mathrm{OC},i}(\cdot)\) is represented by a piecewise-linear lookup map calibrated for the cell model~\cite{skegroRankedPrefixArchive}. The contribution of cell \(i\) to the series-string voltage is
            \begin{equation}
                v_{\mathrm{eng},i}(k)
                =
                S_i(k)\,v_i(k).
                \label{eq:engagedCellVoltage}
            \end{equation}
            The pack voltage \(v_{\mathrm{pack}}\) and power \(p_{\mathrm{pack}}\) are therefore
            \begin{subequations}
                \begin{align}
                    v_{\mathrm{pack}}(k)
                    &=
                    \sum_{i=1}^{N} v_{\mathrm{eng},i}(k)
                    -
                    N r_s i_{\mathrm{pack}}(k),
                    \label{eq:packVoltage}                                                   \\
                    p_{\mathrm{pack}}(k)
                    &=
                    i_{\mathrm{pack}}(k)\,v_{\mathrm{pack}}(k),
                    \label{eq:packPower}
                \end{align}
            \end{subequations}
            where the term \(N r_s i_{\mathrm{pack}}\) is the aggregate voltage drop across the \(N\) conducting switches.
            
            The pack-average SOC is
            \begin{equation}
                \bar{\mathrm{SOC}}(k)
                =
                \frac{1}{N}\sum_{i=1}^{N} \mathrm{SOC}_i(k).
            \end{equation}
            The SOH of cell \(i\) is defined by
            \begin{equation}
                \mathrm{SOH}_i
                =
                {Q_i}/{Q^{\mathrm{nom}}},
            \end{equation}
            where \(Q^{\mathrm{nom}}\) is the nominal cell capacity. In the MPC framework, \(Q_i\) and \(\mathrm{SOH}_i\) are treated as fixed and known parameters over an \(N_p\)-step prediction horizon, because degradation evolves on a much slower timescale than the electrothermal states.

        \subsection{MI-MPC Problem Formulation}
        \label{subsec:MI-MPCprobFor}
            We formulate an MI-MPC controller for the bypass-switched RBP. At each time step \(k\), an online optimization problem is solved over the prediction horizon \(\ell \in \{k+1,\ldots,k+N_p\}\) to minimize the objective function \(J(k)\). The objective penalizes three quantities: the use of degraded cells, SOC imbalance, and curtailed pack power. The decision variables include the binary engagement variables \(S_i\), the pack current \(i_{\mathrm{pack}}\), and the curtailment slack variable \(\lambda\). 
            For compactness, let
            \[
                \mathcal C := \{1,\ldots,N\},
                \qquad
                \mathcal H_k := \{k+1,\ldots,k+N_p\},
            \]
            denote the cell-index set and the prediction-horizon index set,
            respectively.  

            The direction of the requested power is denoted by
            \begin{equation}
                d_p(\ell)
                =
                \begin{cases}
                 1, & p_{\mathrm{req}}(\ell)\ge 0,\\
                -1, & p_{\mathrm{req}}(\ell)<0,
                \end{cases}
                \label{eq:powerDirection}
            \end{equation}
            where \(d_p=1\) during discharge and \(d_p=-1\) during charge.
            where \(d_p=1\) during discharge and \(d_p=-1\) during charge, with the discharge convention applied at \(p_{\mathrm{req}}(\ell)=0\). The curtailment slack variable $\lambda$ represents the unmet power expressed as an equivalent current:
            \begin{equation}
                \label{eq:lambdaDef}
                \lambda(\ell)
                =
                \begin{cases}
                    \dfrac{
                    p_{\mathrm{req}}(\ell)-p_{\mathrm{pack}}(\ell)
                    }{
                    d_p(\ell)v_{\mathrm{pack}}(\ell)
                    },
                    & p_{\mathrm{req}}(\ell)\neq 0, \\[2ex]
                    0,
                    & p_{\mathrm{req}}(\ell)=0 .
                \end{cases}
            \end{equation}

            The curtailment penalty is normalized by the operating-mode-dependent current scale \(i_{\mathrm{scale}}\), defined as
            \begin{equation}
                \label{eq:iscale}
                i_{\mathrm{scale}}(\ell)
                =
                \begin{cases}
                    i_{\mathrm{pack}}^{\max},
                    & p_{\mathrm{req}}(\ell)\ge 0,\\[0.5ex]
                    \lvert i_{\mathrm{pack}}^{\min}\rvert,
                    & p_{\mathrm{req}}(\ell)<0.
                \end{cases}
            \end{equation}
            where \(i_{\mathrm{pack}}^{\max}\) and \(i_{\mathrm{pack}}^{\min}\) are the discharge and charge current limits, respectively. The admissible curtailment is bounded by
            \begin{equation}
            \label{eq:lambdaBound}
                0 \le \lambda(\ell) \le \lambda^{\max}(\ell),
            \end{equation}
            with
            \begin{equation}
            \label{eq:lambdaMax}
                \lambda^{\max}(\ell)
                =
                \delta_P
                {|p_{\mathrm{req}}(\ell)|}/{v_{\mathrm{pack}}^{\min}},
            \end{equation}
            where \(\delta_P\in(0,1)\) specifies the fractional power-curtailment limit at \(v_{\mathrm{pack}}^{\min}\). When \(v_{\mathrm{pack}}(\ell)>v_{\mathrm{pack}}^{\min}\), the corresponding fractional power-curtailment bound increases proportionally to \(v_{\mathrm{pack}}(\ell)/v_{\mathrm{pack}}^{\min}\). Since \(v_{\mathrm{pack}}(\ell)>0\), the lower bound \(\lambda(\ell)\ge 0\) prevents power over-delivery during discharge and power over-absorption during charge.

            The objective function $J$ is defined as
            \begin{align}
                \label{eq:ContrOBJ}
                J(k)
                &=
                \frac{1}{N_p}
                \sum_{\ell=k+1}^{k+N_p}
                \Bigl[
                    w_{\mathrm{SOH}} J_{\mathrm{SOH}}(\ell)
                    +
                    w_{\mathrm{SOC}} J_{\mathrm{SOC}}(\ell)
                    \nonumber \\
                &\qquad\qquad\qquad
                    +
                    w_{\lambda} J_{\lambda}(\ell)
                \Bigr],
            \end{align}
            where \(w_{\mathrm{SOH}}, w_{\mathrm{SOC}}, w_{\lambda}\ge0\) are the corresponding weighting coefficients, and
            \begin{subequations}
                \label{eq:stageCosts}
                \begin{align}
                    J_{\mathrm{SOH}}(\ell)
                    &=
                    \frac{1}{N}
                    \sum_{i=1}^{N}
                    \frac{\mathrm{SOH}^{\mathrm{EOL}}}{\mathrm{SOH}_i}
                    S_i(\ell),
                    \label{eq:JSOH}
                    \\
                    J_{\mathrm{SOC}}(\ell) &= \frac{1}{N} \sum_{i=1}^{N} \left( \frac{\mathrm{SOC}_i(\ell+1)-\bar{\mathrm{SOC}}(\ell+1)} {\Delta \mathrm{SOC}^{\max}(k)} \right)^2, \label{eq:JSOC}
                    \\
                    J_{\lambda}(\ell)
                    &=
                    \frac{\lambda(\ell)}
                    {i_{\mathrm{scale}}(\ell)} .
                    \label{eq:Jlambda}
                \end{align}
            \end{subequations}
            Here, $\mathrm{SOH}^{\mathrm{EOL}}$ denotes the end-of-life SOH threshold, and $\Delta \mathrm{SOC}^{\max}$ is the terminal SOC-spread bound. $J_{\mathrm{SOH}}$ discourages the engagement of more degraded cells by assigning a larger cost to cells with lower SOH. $J_{\mathrm{SOC}}$ penalizes the cell-to-cell imbalance at the successor state resulting from the control action at stage \(\ell\). The curtailment term penalizes the unmet power request expressed in current-equivalent form. All the three stage-cost terms are dimensionless.

            To ensure that the pack current is consistent with the direction of the requested power, we define the admissible pack-current set \(\mathcal{I}\) as
            \begin{equation}
                \label{eq:currentSet}
                \mathcal{I}(\ell)
                =
                \begin{cases}
                    [0,\, i_{\mathrm{pack}}^{\max}],
                    & p_{\mathrm{req}}(\ell)>0, \\[0.5ex]
                    \{0\},
                    & p_{\mathrm{req}}(\ell)=0, \\[0.5ex]
                    [i_{\mathrm{pack}}^{\min},\,0],
                    & p_{\mathrm{req}}(\ell)<0 .
                \end{cases}
            \end{equation}

            With the control input defined as $u=[ S_1,  \ldots, \allowbreak S_N,  i_{\mathrm{pack}}, \allowbreak \lambda]^{\top}\in\{0,1\}^{N}\times\mathbb{R}^{2}$, the state vector as $x=[\mathrm{SOC}_1,  i_{\mathrm{RC},1}, T_1, \allowbreak  \ldots, \allowbreak  \mathrm{SOC}_N, i_{\mathrm{RC},N}, T_N]^{\top}\in\mathbb{R}^{3N}$, and the output vector as $y = [v_1, \allowbreak \ldots, v_N, v_\textrm{pack}]^{\top}\in\mathbb{R}^{N+1}$, where $f(\cdot,\cdot)$ denotes the vectorized state-transition map obtained by stacking the per-cell dynamics~\eqref{eq:cellDynamicsExpanded} over $i\in\mathcal C$, the full MI-MPC problem over all feasible engagement subsets can be formulated as            
            \begin{subequations}
                \label{eq:unrestrictedMIMPC}
                \begin{align}
                    J_{\mathrm{full}}^\star(k)
                    &= \min_{\mathcal{U}_k} J(k)
                    \label{eq:unrestrictedMIMPC_obj} \\
                    \mathrm{s.t.}\qquad \nonumber &
                    \\
                    x(\ell) &= f\left( x(\ell-1), u(\ell-1)\right), 
                    \label{eq:vectorizedStateEquation}\\
                    \sum_{i=1}^{N} S_i(\ell)
                    &\ge m_{\min}(\ell),
                    \label{eq:commonCardinalityFloor} \\
                    i_{\mathrm{pack}}(\ell)
                    &\in \mathcal{I}(\ell),
                    \label{eq:packCurrentConstraints} \\
                    \mathrm{SOC}_i(\ell)
                    &\in [\textrm{SOC}_{\mathrm{cell}}^{\min},\mathrm{SOC}_{\mathrm{cell}}^{\max}],
                    \label{eq:socConstraints} \\
                    T_i(\ell)
                    &\in [T_{\mathrm{cell}}^{\min},T_{\mathrm{cell}}^{\max}],
                    \label{eq:thermalConstraints} \\
                    \begin{split}
                    \max_{i\in\mathcal{C}} \mathrm{SOC}_i(k\!+\!N_p\!+\!1)
                    &-\min_{i\in\mathcal{C}} \mathrm{SOC}_i(k\!+\!N_p\!+\!1) \\
                    &\le \Delta \mathrm{SOC}^{\max}(k),
                    \end{split}
                    \label{eq:termSOCSpreadConstraint}\\
                    v_i(\ell)
                    &\in [v_{\mathrm{cell}}^{\min},v_{\mathrm{cell}}^{\max}],
                    \label{eq:cellVoltageConstraints} \\
                    v_{\mathrm{pack}}(\ell)
                    &\in [v_{\mathrm{pack}}^{\min},v_{\mathrm{pack}}^{\max}],
                    \label{eq:packVoltageConstraints} 
                \end{align}
            \end{subequations}
            where the cell index \(i\in\{1,\ldots, N\}\) and the time index within the prediction horizon \(\ell\in\{k+1,\ldots,k+N_p\}\). In \eqref{eq:commonCardinalityFloor}, the left-hand side \(m(\ell):=\sum_{i=1}^{N}S_i(\ell)\) is the number of cells engaged at stage \(\ell\), and \(m_{\min}(\ell)\) denotes the minimum number of cells that must be engaged at that stage. \eqref{eq:commonCardinalityFloor}--\eqref{eq:packCurrentConstraints} constrain the control inputs; \eqref{eq:socConstraints}-\eqref{eq:termSOCSpreadConstraint} impose the state constraints; and \eqref{eq:cellVoltageConstraints}--\eqref{eq:packVoltageConstraints} impose  the output constraints. In \eqref{eq:termSOCSpreadConstraint}, the terminal SOC-spread bound is updated before solving the MPC problem. It is tightened as the weakest cell approaches the lower SOC limit and relaxed when the pack is sufficiently far from that limit. This adaptive terminal bound prevents excessive cell-to-cell SOC imbalance near the lower SOC limit, where a weak cell could otherwise become power-limiting. The explicit forms of all these state dynamic equation and constraints can be derived from \eqref{eq:S-binary-def}--\eqref{eq:packPower} and \eqref{eq:lambdaDef}--\eqref{eq:lambdaMax}. Under the multiple-shooting schedule used to solve the problem, the optimization-variable set \(\mathcal{U}_k\) contains the input and state sequences over the prediction horizon, namely \(\mathcal{U}_k=[u_{k+1},\allowbreak \ldots,u_{k+N_p}, x_{k+1},\ldots, x_{k+N_p}]^{\top}\).

            The remainder of this subsection specifies how $\Delta\mathrm{SOC}^{\max}(k)$, the right-hand side of~\eqref{eq:termSOCSpreadConstraint}, is computed at each sampling instant so that it tightens as the weakest cell approaches the lower SOC limit and relaxes otherwise. The unsmoothed target bound is
            \begin{equation}
                \label{eq:dSOCmaxTarget}
                \Delta \mathrm{SOC}_{\mathrm{tar}}^{\max}(k)
                =
                \Delta_{\mathrm{low}}
                +
                \bigl(\Delta_{\mathrm{high}}-\Delta_{\mathrm{low}}\bigr)\tau(k),
            \end{equation}            
            with
            \begin{equation}
                \label{eq:tauAdaptive}
                \tau(k)
                =
                \min\left\{
                \max\left\{
                \frac{\min_i \mathrm{SOC}_i(k)-z_1}{z_2-z_1},
                0
                \right\},
                1
                \right\}.
            \end{equation}
            Here, $z_1$ and $z_2$ are SOC thresholds on the weakest cell that delimit a transition band: below $z_1$ the terminal spread bound is tightened to its minimum value $\Delta_{\mathrm{low}}$, above $z_2$ it is relaxed to its maximum value $\Delta_{\mathrm{high}}$, and it varies linearly with the weakest cell's SOC in between. By construction, $\Delta_{\mathrm{low}}\le\Delta_{\mathrm{high}}$ and $z_1<z_2$.
            
            To prevent abrupt tightening, the target is subject to the one-sided slew-rate limit \(\sigma_{\Delta}\):
            \begin{equation}
                \label{eq:dSOCmaxSlew}
                \begin{aligned}
                    \Delta \mathrm{SOC}^{\max}(k)
                    =\max\bigl\{&
                        \Delta \mathrm{SOC}_{\mathrm{tar}}^{\max}(k),\\
                        &\Delta \mathrm{SOC}^{\max}(k-1)
                        -\sigma_{\Delta}\Delta t
                    \bigr\}.
                \end{aligned}
            \end{equation}
            At controller initialization, the previous bound is set to \(\Delta \mathrm{SOC}^{\max}(k-1)=\Delta_{\mathrm{high}}\). The resulting value enters \eqref{eq:unrestrictedMIMPC} as a known parameter.            

            Problem \eqref{eq:unrestrictedMIMPC} defines the full subset-selection formulation subject to the common cardinality floor. Among constraints~\eqref{eq:commonCardinalityFloor}--\eqref{eq:packVoltageConstraints}, only the cardinality floor~\eqref{eq:commonCardinalityFloor} constrains which binary engagement patterns $S(\ell)$ are admissible; the remaining constraints act on the continuous state, input, and output trajectory. The full subset-selection formulation therefore permits every engagement pattern satisfying~\eqref{eq:commonCardinalityFloor}.

        \subsection{Mixed-Integer Convex Reformulation}
        \label{subsec:micp}
            The MI-MPC problem in Section~\ref{subsec:MI-MPCprobFor} is nonlinear. The implemented online formulation requires convexification or approximation of the binary--continuous products in \eqref{eq:cellCurrent} and \eqref{eq:engagedCellVoltage}, the nonlinear OCV relation in \eqref{eq:cellVoltage}, the quadratic current terms in \eqref{eq:qi_gen}, and the inverse pack-voltage term in \eqref{eq:lambdaDef}. 
            
            To obtain a tractable mixed-integer convex approximation, the binary--continuous products are reformulated using auxiliary variables and bound-based linear constraints~\cite{bemporad1999control,williams2013model}. In particular, the product \(S_i i_{\mathrm{pack}}\) in~\eqref{eq:cellCurrent} is replaced by the cell-current variable \(i_i\), with linear constraints enforcing \(i_i=0\) when \(S_i=0\) and \(i_i=i_{\mathrm{pack}}\) when \(S_i=1\). The same construction is used for the engaged-voltage product in~\eqref{eq:engagedCellVoltage}. These reformulations are exact for valid bounds on the continuous variables.

            The squared pack current in the thermal model is represented by an epigraph variable and a second-order-cone constraint; the polarization-loss term is neglected in the prediction model, as detailed in Appendix~\ref{app:micp}. At each sampling instant, the piecewise-linear OCV map is locally replaced by the affine segment containing the SOC, with its slope and intercept held fixed over the prediction horizon. The reciprocal pack voltage in the curtailment model is approximated by a special ordered set of type 2 (SOS2) piecewise-linear interpolation over the admissible pack-voltage interval \([v_{\mathrm{pack}}^{\min},v_{\mathrm{pack}}^{\max}]\).

            As a result, we establish a mixed-integer convex approximation of the nonlinear MI-MPC formulation in~\eqref{eq:unrestrictedMIMPC}. For fixed binary engagement variables and SOS2 interpolation regions, the remaining continuous problem is convex, so branch-and-bound resolves only the engagement decisions and SOS2 adjacency choices.

            Throughout this paper, references to solving Problem~\eqref{eq:unrestrictedMIMPC}, or a ranked-prefix restriction of it, mean solving the corresponding mixed-integer convex reformulation.  Accordingly, \(J_{\mathrm{full}}^\star(k)\) in \eqref{eq:unrestrictedMIMPC_obj} denotes the optimal value of this mixed-integer convex reformulation of Problem~\eqref{eq:unrestrictedMIMPC}.
            
            The algebraic reformulations needed to specify the optimization problem are given in Appendix~\ref{app:micp}; calibration data and nonessential implementation constants are provided in~\cite{skegroRankedPrefixArchive}.

    \section{Combinatorial Structure of Cell Engagement}
    \label{sec:sourceRedundancy}

        At each prediction step \(\ell\), the full subset-selection formulation~\eqref{eq:unrestrictedMIMPC} selects a binary engagement vector \(S(\ell)\in\{0,1\}^{N}\). Certification of the resulting mixed-integer problem must therefore account for the associated binary candidate domain. This section characterizes that domain and introduces the stage-dependent cardinality floor used in~\eqref{eq:unrestrictedMIMPC}.
        
        A primary descriptor of an engagement vector is its cardinality \(m(\ell)\), introduced in~\eqref{eq:commonCardinalityFloor}, since it strongly influences the attainable pack voltage and the current required to serve a given power request. For a fixed cardinality \(m_{\mathrm{fix}}\), define the same-cardinality class
        \begin{subequations}
            \begin{align}
            \mathcal{S}_{m_{\mathrm{fix}}}
            &:=
            \{
            S\in\{0,1\}^{N}:
            \mathbf{1}^{\top}S=m_{\mathrm{fix}}
            \}, \label{eq:samecard_class}
            \\
            |\mathcal{S}_{m_{\mathrm{fix}}}|
            &=
            \binom{N}{m_{\mathrm{fix}}}.
            \end{align}
        \end{subequations}
        Fixing the number of engaged cells therefore does not remove the subset-selection problem. If \(m_{\mathrm{fix}}\) scales with the pack size (formally, if \(\varepsilon N \le m_{\mathrm{fix}} \le (1-\varepsilon)N\) for some fixed \(\varepsilon\in(0,\tfrac{1}{2}]\)), then \(\binom{N}{m_{\mathrm{fix}}}\) grows exponentially with \(N\); for an \(m_{\mathrm{fix}}\) held fixed independently of \(N\), it grows only polynomially. In either case, the multiplicity of subsets within the cardinality classes, rather than the number of cardinality values alone, accounts for the size of the binary domain.

        Before solving the MPC problem, we screen out low-cardinality engagement vectors using idealized voltage and power-capability considerations, evaluated separately at each prediction step. Deriving an admissible series-cell count from a required voltage range is an established device for reconfigurable packs~\cite{geng2025fundamental}; here it is combined with a power-capability screen into a single floor. The voltage-based threshold \(m_{\min}^{(V)}\) is computed from the pack-voltage floor and the cell-voltage ceiling as 
        \begin{equation}
            m_{\min}^{(V)}
            :=
            \left\lceil
            \frac{v_{\mathrm{pack}}^{\min}}
            {v_{\mathrm{cell}}^{\max}}
            \right\rceil .
            \label{eq:m_min_V}
        \end{equation}

        The power-capability threshold additionally requires a current ceiling consistent with the direction of the request. Denote this ceiling by \(i_{\mathrm{hw}}(\ell)\): it equals the discharge limit \(i_{\mathrm{pack}}^{\max}\) when \(p_{\mathrm{req}}(\ell)\ge 0\), and the charge limit \(|i_{\mathrm{pack}}^{\min}|\) otherwise,
        \begin{equation}
            i_{\mathrm{hw}}(\ell)
            =
            \begin{cases}
            i_{\mathrm{pack}}^{\max}, & p_{\mathrm{req}}(\ell)\ge 0,\\
            |i_{\mathrm{pack}}^{\min}|, & p_{\mathrm{req}}(\ell)<0 .
            \end{cases}
            \label{eq:iHW}
        \end{equation}
        
        Under the same idealization as \(m_{\min}^{(V)}\), and neglecting resistive losses, the full-request power-capability threshold \(m_{\min}^{(P)}(\ell)\) is the number of cells needed to serve the magnitude of \(p_{\mathrm{req}}(\ell)\) without exceeding \(i_{\mathrm{hw}}(\ell)\), each assumed to operate at \(v_{\mathrm{cell}}^{\max}\):
        \begin{equation}
            m_{\min}^{(P)}(\ell)
            :=
            \left\lceil
            \frac{|p_{\mathrm{req}}(\ell)|}
            {v_{\mathrm{cell}}^{\max} i_{\mathrm{hw}}(\ell)}
            \right\rceil .
            \label{eq:m_min_P}
        \end{equation}
        Because \(\lambda(\ell)>0\) is permitted, \(m_{\min}^{(P)}(\ell)\) is not a necessary feasibility condition for the curtailed problem: it may exclude a lower-cardinality vector that would in fact be feasible through curtailment.
        
        The two thresholds are combined and capped at the pack size,
        \begin{equation}
            m_{\min}(\ell)
            =
            \min\left(
            \max\{
            m_{\min}^{(V)}\,,
            m_{\min}^{(P)}(\ell)
            \},
            N
            \right).
            \label{eq:m_min}
        \end{equation}
        The cap only bounds the floor by the pack size. It does not assert that engaging all \(N\) cells suffices to meet the request, which remains subject to the full constraint set and, where needed, to curtailment.

        Accordingly, the resulting binary candidate domain at prediction step \(\ell\) is
        \begin{subequations}
            \begin{align}
                \mathcal{S}_{\mathrm{cand}}(\ell)
                &=
                \{
                S\in\{0,1\}^{N}:
                \mathbf{1}^{\top}S
                \ge
                m_{\min}(\ell)
                \},
                \\
                \left|
                \mathcal{S}_{\mathrm{cand}}(\ell)
                \right|
                &=
                \sum_{p=m_{\min}(\ell)}^{N}
                \binom{N}{p}.
            \end{align}
        \end{subequations}

        We call \(\mathcal{S}_{\mathrm{cand}}(\ell)\) a candidate domain, not a feasible set: it contains the binary vectors not excluded by the cardinality floor alone, before determining whether each vector admits continuous state, input, and output trajectories satisfying~\eqref{eq:unrestrictedMIMPC}. Its size quantifies the combinatorial multiplicity associated with cell engagement in the full subset-selection formulation. The admissible cardinality itself takes only \(N-m_{\min}(\ell)+1\) values, whereas each cardinality \(p\) contributes \(\binom{N}{p}\) distinct vectors to \(\mathcal{S}_{\mathrm{cand}}(\ell)\).

    \section{Ranking-Based Feasible-Set Restriction}
    \label{sec:restriction}

        We propose RFSR-MI-MPC, a ranking-based feasible-set restriction that reduces the combinatorial search space of the MI-MPC problem in Section~\ref{sec:modelAndProblemFormulation} by retaining only ranked-prefix engagement subsets. This section defines the objective-informed cell-suitability score and the induced ordering it produces, derives the resulting monotone prefix constraint and restricted MICP formulation, establishes the feasibility and optimality properties of the restriction, and presents its online implementation as a deployable algorithm.

        \subsection{Objective-Informed Cell Ranking}
        \label{subsec:objectiveinformed_cell_ranking}
            At sampling instant \(k\), the cells are ordered from the current cell states and parameters, and this ordering is held fixed over the prediction horizon. The ranking is not a greedy controller; it only defines the admissible prefix-structured binary set used by the finite-horizon MICP. For active power requests, \(p_{\mathrm{req}}(k)\neq0\), the requested-power direction \(d_p(k)\) is given by \eqref{eq:powerDirection}. Each cell is assigned a scalar suitability score
            \begin{equation}\label{eq:score_raw}
                \rho_i(k)
                =
                \rho_i^{(\mathrm{SOH})}(k)
                +
                \rho_i^{(\mathrm{SOC})}(k)
                +
                \rho_i^{(v)}(k).
            \end{equation}
            Larger values of \(\rho_i(k)\) indicate greater suitability for engagement under the current operating condition. The score is a one-step, objective-informed surrogate rather than the gradient of the exact finite-horizon mixed-integer value function.

            \paragraph{Representative Current and Terminal-Voltage Estimate}
                The score components are evaluated at a representative current magnitude
                \begin{equation}\label{eq:iexp_rank}
                    i_{\mathrm{exp}}(k)
                    =
                    \min\left(
                    \frac{|p_{\mathrm{req}}(k)|}
                    {v_{\mathrm{OC,ref}}(k)\,m_{\min}(k)},
                    \;
                    i_{\mathrm{hw}}(k)
                    \right),
                \end{equation}
                where
                \begin{equation}\label{eq:vocref_rank}
                    v_{\mathrm{OC,ref}}(k)
                    =
                    \begin{cases}
                        \min_i v_{\mathrm{OC},i}(k), & p_{\mathrm{req}}(k)>0,\\
                        \max_i v_{\mathrm{OC},i}(k), & p_{\mathrm{req}}(k)<0,
                    \end{cases}
                \end{equation}
                and \(i_{\mathrm{hw}}(k)\) is defined in \eqref{eq:iHW}. The value \(i_{\mathrm{exp}}(k)\) is used only for ranking; it is not imposed as the pack current in the MICP. The corresponding terminal-voltage estimate for cell \(i\) is
                \begin{equation}\label{eq:vtilde_rank}
                    \hat{v}_i(k)
                    =
                    v_{\mathrm{OC},i}(k)
                    -
                    d_p(k) R_0 i_{\mathrm{exp}}(k)
                    -
                    R_1 i_{\mathrm{RC},i}(k),
                \end{equation}
                with the \(R_0\) term changing sign between discharge and charge, while the RC polarization term keeps the sign of the estimated state \(i_{\mathrm{RC},i}(k)\). The estimate is clipped to the admissible voltage interval,
                \begin{equation}\label{eq:vtilde_clip_rank}
                    \hat{v}_i(k)
                    \leftarrow
                    \max\left(
                    v_{\mathrm{cell}}^{\min},
                    \min\left(\hat{v}_i(k),v_{\mathrm{cell}}^{\max}\right)
                    \right).
                \end{equation}

            \paragraph{SOH Component}
                The SOH component follows from the SOH engagement term in the horizon-averaged weighted objective:
                \begin{equation}\label{eq:rho_SOH}
                    \rho_i^{(\mathrm{SOH})}(k)
                    =
                    -
                    \frac{
                    w_{\mathrm{SOH}}\,\mathrm{SOH}^{\mathrm{EOL}}
                    }{
                    N_p N\,\mathrm{SOH}_i(k)
                    } .
                \end{equation}
                Cells with lower \(\mathrm{SOH}_i(k)\) therefore receive more negative scores and are ranked lower, all else equal.

            \paragraph{SOC-Balancing Component}
                The SOC component uses a one-step approximation of the effect of engagement on the SOC-balancing term. At the representative current,
                \begin{equation}
                    \mathrm{SOC}_i(k+1)
                    =
                    \mathrm{SOC}_i(k)
                    -
                    d_p(k) 
                    \frac{\Delta t}{3600Q_i(k)} 
                    i_{\mathrm{exp}}(k)S_i(k),
                \end{equation}
                where \(d_p(k)\) is defined in \eqref{eq:powerDirection}, and \(Q_i(k)=Q^{\mathrm{nom}}\mathrm{SOH}_i(k)\) is the effective cell capacity. Differentiating this one-step contribution to the horizon-averaged weighted objective gives
                \begin{equation}\label{eq:ci_rank}
                    c_i(k)
                    =
                    \frac{
                    2w_{\mathrm{SOC}}\Delta t\,i_{\mathrm{exp}}(k)
                    }{
                    3600 N_p N Q_i(k)\bigl(\Delta \mathrm{SOC}^{\max}(k)\bigr)^2
                    } .
                \end{equation}
                The SOC score is then
                \begin{equation}\label{eq:rho_SOC}
                    \rho_i^{(\mathrm{SOC})}(k)
                    =
                    d_p(k) c_i(k)
                    \bigl(\mathrm{SOC}_i(k)-\bar{\mathrm{SOC}}(k)\bigr).
                \end{equation}
                Thus, during discharge, above-average-SOC cells are favored; during charge, below-average-SOC cells are favored.

            \paragraph{Voltage-Curtailment Component}
                The voltage component reflects the curtailment normalization used in \(J_\lambda\), evaluated at the first prediction step. Let \(i_{\mathrm{scale}}(k)\) denote the operating-mode-dependent current normalization at sampling instant \(k\), defined in \eqref{eq:iscale} as \(i_{\mathrm{pack}}^{\max}\) during discharge and \(|i_{\mathrm{pack}}^{\min}|\) during charge. The voltage score is
                \begin{equation}
                    \label{eq:rho_v}
                    \rho_i^{(v)}(k)
                    =
                    d_p(k)
                    \frac{
                    w_\lambda i_{\mathrm{exp}}^2(k)
                    }{
                    N_p i_{\mathrm{scale}}(k)
                    \lvert p_{\mathrm{req}}(k)\rvert
                    }
                    \hat v_i(k).
                \end{equation}
                Thus, higher estimated terminal voltage is favored during discharge and lower estimated terminal voltage during charge. For \(p_{\mathrm{req}}(k)=0\), set \(\rho_i^{(v)}(k)=0\).

            \paragraph{Composite Score and Cell Ordering}
                Collect the per-cell suitability scores from \eqref{eq:score_raw} into the vector 
                \begin{equation}\label{eq:rho_vector}
                    \boldsymbol{\rho}(k) = \big(\rho_1(k),\dots,\rho_N(k)\big)^{\top}.                     
                \end{equation}
                
                The ordering permutation is defined by sorting the composite scores in non-increasing order:
                \begin{equation}\label{eq:argsort}
                    \pi_k = \operatorname{argsort}\big(-\boldsymbol{\rho}(k)\big),
                \end{equation}
                where $\pi_k : \{1,\dots,N\} \to \{1,\dots,N\}$ is the permutation such that $\pi_k(j)$ is the physical index of the cell ranked $j^{\mathrm{th}}$ by composite score, so that
                \begin{equation}
                    \rho_{\pi_k(1)}(k) \ge \rho_{\pi_k(2)}(k) \ge \cdots \ge \rho_{\pi_k(N)}(k).
                \end{equation}
                Ties are resolved by preserving the physical cell-index order, yielding a deterministic ordering for equal scores. After \(\pi_k\) is fixed, all cell-indexed variables, parameters, and constraints are reordered accordingly. The resulting ordering in \eqref{eq:argsort} is a one-step surrogate used only to construct the prefix-structured binary set used by the finite-horizon MICP. 

                \begin{proposition}[Optimality of the ranked prefix for the affine one-step surrogate]
                    \label{prop:surrogate_prefix_optimality}
                    Fix a sampling instant \(k\) and an engagement cardinality \(m_{\mathrm{fix}} \in\{0,1,\dots,N\}\). Suppose the one-step surrogate cost \(\hat{J}_{m_{\mathrm{fix}}}(s;k)\) for a binary vector \(s\in\{0,1\}^N\) satisfying \(\sum_{i=1}^N s_i=m_{\mathrm{fix}}\) can be written as
                    \begin{equation}
                        \hat{J}_{m_{\mathrm{fix}}}(s;k)
                        =
                        \beta_{m_{\mathrm{fix}}}(k)
                        -
                        \sum_{i=1}^{N}\rho_i(k)s_i ,
                    \end{equation}
                    where \(\beta_{m_{\mathrm{fix}}}(k)\) is independent of the selected subset. Under the ordering \(\pi_k\) in~\eqref{eq:argsort}, the ranked-prefix vector \(\mathbf{s}^\star \in \{0,1\}^N\), defined component-wise at each physical cell index \(\pi_k(j)\) by
                    \begin{equation}
                        s^\star_{\pi_k(j)} =
                        \begin{cases}
                            1, & j \le m_{\mathrm{fix}},\\
                            0, & j > m_{\mathrm{fix}},
                        \end{cases}
                        \qquad j = 1,\dots,N,
                    \end{equation}
                    minimizes \(\hat{J}_{m_{\mathrm{fix}}}(s;k)\) over all binary vectors of cardinality \(m_{\mathrm{fix}}\). For \(0<m_{\mathrm{fix}}<N\), this minimizer is unique among all cardinality-\(m_{\mathrm{fix}}\) binary vectors if
                    \begin{equation}
                        \rho_{\pi_k(m_{\mathrm{fix}})}(k)
                        >
                        \rho_{\pi_k(m_{\mathrm{fix}}+1)}(k).
                    \end{equation}
                \end{proposition}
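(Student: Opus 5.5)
Since \(\beta_{m_{\mathrm{fix}}}(k)\) does not depend on the selected subset, minimizing \(\hat{J}_{m_{\mathrm{fix}}}(s;k)\) over cardinality-\(m_{\mathrm{fix}}\) binary vectors is equivalent to maximizing the linear functional \(F(s):=\sum_{i=1}^N \rho_i(k)s_i\) over the same set. The plan is to prove this equivalent maximization statement by an exchange argument. The edge cases \(m_{\mathrm{fix}}\in\{0,N\}\) are disposed of first: the feasible set is then the singleton \(\{\mathbf 0\}\) or \(\{\mathbf 1\}\), which coincides with \(\mathbf{s}^\star\).

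For \(0<m_{\mathrm{fix}}<N\), I will reindex by the ordering \(\pi_k\) in~\eqref{eq:argsort}, writing \(r_j:=\rho_{\pi_k(j)}(k)\), so that \(r_1\ge\cdots\ge r_N\). Any feasible \(s\) corresponds to an index set \(A\subseteq\{1,\dots,N\}\) of ranks with \(|A|=m_{\mathrm{fix}}\). The prefix is \(P=\{1,\dots,m_{\mathrm{fix}}\}\). Since \(|A|=|P|\), the sets \(A\setminus P\) and \(P\setminus A\) have equal cardinality \(q\). Every \(j\in P\setminus A\) satisfies \(j\le m_{\mathrm{fix}}\), hence \(r_j\ge r_{m_{\mathrm{fix}}}\). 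Every \(j'\in A\setminus P\) satisfies \(j'\ge m_{\mathrm{fix}}+1\), hence \(r_{j'}\le r_{m_{\mathrm{fix}}+1}\). This gives
\[
F(\mathbf{s}^\star)-F(s)=\sum_{j\in P\setminus A} r_j-\sum_{j'\in A\setminus P} r_{j'}\;\ge\; q\bigl(r_{m_{\mathrm{fix}}}-r_{m_{\mathrm{fix}}+1}\bigr)\;\ge\;0 ,
\]
which establishes optimality. Equivalently, one could argue by repeated pairwise swaps, each of which does not decrease \(F\), but the direct counting bound above avoids an induction.

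For uniqueness under \(r_{m_{\mathrm{fix}}}>r_{m_{\mathrm{fix}}+1}\), I will take any \(s\ne\mathbf{s}^\star\). Then \(A\ne P\), so \(q\ge1\), and the displayed bound gives \(F(\mathbf{s}^\star)-F(s)\ge r_{m_{\mathrm{fix}}}-r_{m_{\mathrm{fix}}+1}>0\). Hence \(\hat J_{m_{\mathrm{fix}}}(s;k)>\hat J_{m_{\mathrm{fix}}}(\mathbf{s}^\star;k)\), and \(\mathbf{s}^\star\) is the unique minimizer.

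No step is a genuine obstacle. The only point needing care is bookkeeping between physical indices and rank positions: the claim is stated componentwise at \(\pi_k(j)\), so I will state explicitly that \(s\mapsto A=\{j:s_{\pi_k(j)}=1\}\) is a bijection between cardinality-\(m_{\mathrm{fix}}\) binary vectors and \(m_{\mathrm{fix}}\)-subsets of ranks. I will also note that the tie-breaking rule of~\eqref{eq:argsort} affects only which minimizer is returned when \(r_{m_{\mathrm{fix}}}=r_{m_{\mathrm{fix}}+1}\), and not the optimality claim itself.
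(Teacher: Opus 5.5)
Your proposal is correct and is essentially the paper's exchange argument. The paper swaps one included lower-ranked cell for one omitted prefix cell and repeats the swap, whereas you aggregate all \(q\) swaps into the single bound \(F(\mathbf{s}^\star)-F(s)\ge q\bigl(r_{m_{\mathrm{fix}}}-r_{m_{\mathrm{fix}}+1}\bigr)\); this gives uniqueness directly from \(q\ge 1\) and avoids the implicit induction, and your separate treatment of \(m_{\mathrm{fix}}\in\{0,N\}\) is slightly more careful than the paper's sketch.
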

                
                This is the classical greedy optimality theorem for weighted matroids, specialized here to the uniform matroid (equivalently, top-selection); see~\cite{edmonds1971matroids, oxley1992matroid}. For completeness, we include a short exchange proof.
                \begin{proof}
                    Since \(\beta_{m_{\mathrm{fix}}}(k)\) is independent of the selected subset, minimizing \(\hat{J}_{m_{\mathrm{fix}}}(s;k)\) is equivalent to maximizing
                    \[
                        \sum_{i=1}^{N}\rho_i(k)s_i
                    \]
                    subject to \(s_i\in\{0,1\}\) and \(\sum_i s_i=m_{\mathrm{fix}}\). If a feasible selection contains a cell \(\pi_k(q)\) with \(q>m_{\mathrm{fix}}\) and omits a cell \(\pi_k(p)\) with \(p\le m_{\mathrm{fix}}\), exchanging these two selections changes the score sum by
                    \[
                        \rho_{\pi_k(p)}(k)-\rho_{\pi_k(q)}(k)\ge0 .
                    \]
                    Thus the exchange cannot increase the surrogate cost. Repeating this exchange yields the ranked-prefix vector. If \(\rho_{\pi_k(m_{\mathrm{fix}})}(k)>\rho_{\pi_k(m_{\mathrm{fix}}+1)}(k)\), every non-prefix cardinality-\(m_{\mathrm{fix}}\) vector omits at least one strictly higher-scored prefix cell and includes at least one strictly lower-scored non-prefix cell, so the ranked-prefix minimizer is unique.
                \end{proof}

        \subsection{Monotone Prefix Constraint}
        \label{subsec:monotone_prefix}

            At sampling instant \(k\), the score ordering \(\pi_k\) is fixed before the finite-horizon MICP is solved. We express the binary engagement variables in this ordered coordinate system as
            \begin{equation}\label{eq:reordered_coords}
                \tilde S_j(\ell) = S_{\pi_k(j)}(\ell),
                \quad j\in\mathcal C,\quad \ell\in\mathcal H_k .
            \end{equation}           
            Note that $\tilde{S}_j(\ell)$ is defined by evaluating the original engagement variable $S_i(\ell)$ at the physical cell index $i = \pi_k(j)$; the rank $j$ enters only as the argument of $\pi_k(\cdot)$, which returns that physical index. The ranked-prefix restriction is imposed by the linear monotonicity constraints
            \begin{equation}\label{eq:monotone}
                \tilde S_1(\ell)
                \ge
                \tilde S_2(\ell)
                \ge
                \cdots
                \ge
                \tilde S_N(\ell),
                \qquad
                \tilde S_j(\ell)\in\{0,1\},                
            \end{equation}
            for every prediction step \(\ell\in\mathcal H_k\).
            
            Because the variables are binary, \eqref{eq:monotone} admits exactly the prefix vectors
            \begin{equation}\label{eq:prefix_shape}
                \tilde S^{(m)}
                =
                \begin{bmatrix}
                \underbrace{1,\ldots,1}_{m},
                \underbrace{0,\ldots,0}_{N-m}
                \end{bmatrix}^{\top},
                \qquad m=0,\ldots,N .
            \end{equation}
            Thus, for a given prefix length \(m\), the engaged cells are precisely the \(m\) highest-ranked cells under \(\pi_k\). The optimizer does not choose an arbitrary subset of \(m\) cells; it chooses only the prefix length.
            
            Because the common floor \eqref{eq:commonCardinalityFloor} is already part of the full subset-selection problem, imposing \eqref{eq:monotone} leaves the admissible prefix lengths
            \begin{equation}
                m(\ell)
                =
                \sum_{j=1}^{N}\tilde S_j(\ell)
                \in
                \{m_{\min}(\ell),\ldots,N\},
            \label{eq:m_select}
            \end{equation}
            where \(m(\ell)\) is the same engagement cardinality defined via \eqref{eq:commonCardinalityFloor}: since \(\pi_k\) is a permutation of the cell indices, the reordering~\eqref{eq:reordered_coords} leaves the sum \(\sum_i S_i(\ell)\) unchanged, so \(\sum_{j=1}^N \tilde S_j(\ell) = \sum_{i=1}^N S_i(\ell)\).
            
            Therefore, at prediction step \(\ell\), the number of admissible binary engagement patterns is reduced from
            \begin{equation}
                \sum_{p=m_{\min}(\ell)}^{N}\binom{N}{p}
            \end{equation}
            to
            \begin{equation}
                N-m_{\min}(\ell)+1 .
            \end{equation}
            
            The restriction adds no binary variables. It only adds the \(N-1\) linear inequalities in~\eqref{eq:monotone} per prediction step, after reordering the cell-indexed quantities according to \(\pi_k\). 

            Collecting~\eqref{eq:reordered_coords} and~\eqref{eq:monotone} together with the original finite-horizon problem~\eqref{eq:unrestrictedMIMPC}, the ranked-prefix restricted problem solved at sampling instant $k$ is        
            \begin{equation}
                \label{eq:restrictedMIMPC}
                \begin{aligned}
                    J_{\mathrm{pref}}^\star(k)
                    &=
                    \min_{\mathcal{U}_k}\; J(k)\\
                    \mathrm{s.t.}\quad
                    &\text{all constraints of~\eqref{eq:unrestrictedMIMPC}},\\
                    &\tilde S_j(\ell)=S_{\pi_k(j)}(\ell),
                    \qquad j\in\mathcal C,\ \ell\in\mathcal H_k,\\
                    &\tilde S_j(\ell)\ge\tilde S_{j+1}(\ell),
                    \qquad j=1,\ldots,N-1,\ \ell\in\mathcal H_k.
                \end{aligned}
            \end{equation}
            Thus, Problem~\eqref{eq:restrictedMIMPC} is the mixed-integer convex reformulation of Problem~\eqref{eq:unrestrictedMIMPC} (Section~\ref{subsec:micp}), with the prefix inequalities~\eqref{eq:monotone} added; $J_{\mathrm{pref}}^\star(k)$ denotes its optimal value, consistently with $J_{\mathrm{full}}^\star(k)$ in~\eqref{eq:unrestrictedMIMPC_obj}. This restricted formulation, together with the cell-ranking procedure of Section~\ref{subsec:objectiveinformed_cell_ranking}, constitutes RFSR-MI-MPC. Because the cardinality floor~\eqref{eq:commonCardinalityFloor}, together with the thresholds defined in Section~\ref{sec:sourceRedundancy}, carries over unchanged from \eqref{eq:unrestrictedMIMPC} into \eqref{eq:restrictedMIMPC}, this floor is identical for the full subset-selection and ranked-prefix formulations and therefore does not confound their comparison.

        \subsection{Feasibility and Optimality}
        \label{sec:properties}

            The ranked-prefix restriction modifies only the admissible binary engagement set after the common floor has been imposed. Let \(\mathcal{Z}_{\mathrm{full}}(k)\) denote the feasible set of the full subset-selection MICP, including \eqref{eq:commonCardinalityFloor}, and let \(\mathcal{Z}_{\mathrm{pref}}(k)\) denote the feasible set obtained by additionally imposing \eqref{eq:monotone}. Hence,
            \begin{equation}
                \mathcal{Z}_{\mathrm{pref}}(k)
                \subseteq
                \mathcal{Z}_{\mathrm{full}}(k),
                \label{eq:feasset_inclusion}
            \end{equation}
            and, whenever both problems are feasible,
            \begin{equation}
                J_{\mathrm{full}}^\star(k)
                \le
                J_{\mathrm{pref}}^\star(k).
                \label{eq:restricted_value_upper}
            \end{equation}          
    
            Thus, optimality certificates obtained for the ranked-prefix formulation apply to \(\mathcal{Z}_{\mathrm{pref}}(k)\), not necessarily to \(\mathcal{Z}_{\mathrm{full}}(k)\). The all-engaged vector \(\mathbf{1}_N\) is a prefix under every ordering, so any feasible all-engaged finite-horizon trajectory remains feasible after restriction; this is only a sufficient fallback candidate, not a recursive-feasibility guarantee.
            
            \begin{proposition}[Finite-horizon exactness]
                \label{prop:exactness}
                Fix a sampling instant $k$ and use the ordering $\pi_k$ in~\eqref{eq:argsort}. If the full subset-selection problem~\eqref{eq:unrestrictedMIMPC} admits a globally optimal binary trajectory \(\{S^\star(\ell)\}_{\ell\in\mathcal H_k}\) that satisfies the ranked-prefix constraint~\eqref{eq:monotone} for every $\ell\in\mathcal H_k$, then
               \begin{equation}
                    J_{\mathrm{pref}}^\star(k)
                    =
                    J_{\mathrm{full}}^\star(k).
                \end{equation}
            \end{proposition}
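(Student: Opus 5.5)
The plan is a two-sided inequality argument built on the feasible-set inclusion \eqref{eq:feasset_inclusion} and its consequence \eqref{eq:restricted_value_upper}. One direction, \(J_{\mathrm{full}}^\star(k)\le J_{\mathrm{pref}}^\star(k)\), is already available once both problems are known to be feasible. The proof therefore reduces to two tasks: showing that the restricted problem~\eqref{eq:restrictedMIMPC} is feasible, and establishing the reverse inequality \(J_{\mathrm{pref}}^\star(k)\le J_{\mathrm{full}}^\star(k)\).

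For the reverse direction I will take the hypothesised globally optimal point of the full problem. This is a full decision vector \(z^\star\in\mathcal{Z}_{\mathrm{full}}(k)\) consisting of the binary trajectory \(\{S^\star(\ell)\}_{\ell\in\mathcal H_k}\) together with the accompanying continuous inputs, states, auxiliary product variables and SOS2 weights. It attains \(J(k)=J_{\mathrm{full}}^\star(k)\). I will show that \(z^\star\in\mathcal{Z}_{\mathrm{pref}}(k)\) by checking the constraints of~\eqref{eq:restrictedMIMPC} one at a time:
\begin{itemize}
\item All constraints of~\eqref{eq:unrestrictedMIMPC} hold because \(z^\star\in\mathcal{Z}_{\mathrm{full}}(k)\).
\item The coordinate relations \(\tilde S_j(\ell)=S^\star_{\pi_k(j)}(\ell)\) hold by definition: they introduce no new freedom, and \(\pi_k\) is a fixed permutation determined before the solve.
\item The inequalities \(\tilde S_j(\ell)\ge\tilde S_{j+1}(\ell)\) hold by hypothesis, since \(S^\star\) satisfies~\eqref{eq:monotone} at every \(\ell\in\mathcal H_k\).
\end{itemize}
This shows \(z^\star\) is feasible for the restricted problem, so the restricted problem is feasible. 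Because the objective \(J(k)\) in~\eqref{eq:restrictedMIMPC} is the same function as in the full problem, it follows that \(J_{\mathrm{pref}}^\star(k)\le J(k)\big|_{z^\star}=J_{\mathrm{full}}^\star(k)\).

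Combining this with \eqref{eq:restricted_value_upper} gives equality. As a by-product, \(z^\star\) is also optimal for the restricted problem.

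The main point needing care is that the hypothesis concerns only the binary trajectory. I have to make sure that the continuous completion of the full optimum carries over unchanged. It does, because the restriction~\eqref{eq:monotone} acts on the binaries alone and reordering is just a relabelling. The model, objective and other constraints are untouched, and the reordered problem is the same MICP written in permuted coordinates.

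I would also note explicitly two points. First, "globally optimal" refers to the MICP reformulation, consistent with the definition of \(J_{\mathrm{full}}^\star(k)\). Second, existence of minimizers is assumed through the hypothesis, so the optimal values are attained rather than mere infima.
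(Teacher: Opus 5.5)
Your proof is correct and follows the same two-sided argument as the paper: the set inclusion \eqref{eq:feasset_inclusion} gives \(J_{\mathrm{full}}^\star(k)\le J_{\mathrm{pref}}^\star(k)\), and the hypothesised full-subset optimizer being feasible for~\eqref{eq:restrictedMIMPC} gives the reverse inequality. You also note explicitly that this same feasibility is what makes the restricted problem feasible, so that \eqref{eq:restricted_value_upper} applies; the paper leaves that step implicit.
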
          

            Set inclusion gives \(J_{\mathrm{pref}}^\star(k)\ge J_{\mathrm{full}}^\star(k)\). The hypothesized full-subset optimizer is feasible for the ranked-prefix problem, which gives the reverse inequality.

            When this condition fails, the ranked-prefix solution can be suboptimal relative to the full subset-selection finite-horizon optimum. No general finite-horizon sub-optimality bound is claimed; the approximation cost is instead quantified empirically using same-state full-subset reference solves.

        \subsection{Online Implementation}
        \label{subsec:onlineImplementation}
            
            The online implementation of RFSR-MI-MPC is summarized in Algorithm~\ref{alg:onlineRankedPrefixMI-MPC}. The computational overhead introduced by the ranking procedure consists of \(O(N)\) operations for score evaluation, \(O(N\log N)\) time for sorting, and \(O(N)\) memory for storing the score and permutation vectors. The resulting MICP is solved directly using a standard solver without callbacks, custom branching rules, offline training, or solver-specific modifications.
            
            \begin{algorithm}[!ht]
                \caption{Online RFSR-MI-MPC}
                \label{alg:onlineRankedPrefixMI-MPC}
                \begin{algorithmic}[1]
                    \Require Cell states; requested-power sequence
                    \({p_{\mathrm{req}}(\ell)}\,,{\ell\in\mathcal H_k}\); previous bound
                    \(\Delta \mathrm{SOC}^{\max}(k-1)\).
                    \Ensure Applied engagement vector \(S(k)\) and pack-current reference
                    \(i_{\mathrm{pack}}(k)\).
                    
                        \State Initialize the predicted trajectory from the current state \(x(k)\); update \(\Delta \mathrm{SOC}^{\max}(k)\) using \eqref{eq:dSOCmaxTarget}--\eqref{eq:dSOCmaxSlew}; and compute \(\{m_{\min}(\ell)\}_{\ell\in\mathcal H_k}\) using \eqref{eq:m_min_V}--\eqref{eq:m_min}.
                    
                        \State Compute the representative current and terminal-voltage estimates
                        using \eqref{eq:iexp_rank}--\eqref{eq:vtilde_clip_rank}; for
                        \(p_{\mathrm{req}}(k)=0\), set \(i_{\mathrm{exp}}(k)=0\).
                    
                        \State Evaluate the composite scores \(\{\rho_i(k)\}_{i\in\mathcal C}\)
                        from \eqref{eq:score_raw}--\eqref{eq:rho_v} and obtain the deterministic
                        ordering \(\pi_k\) from \eqref{eq:argsort}.
                    
                        \State Reorder the cell-indexed quantities according to \(\pi_k\) and
                        impose the ranked-prefix constraints
                        \eqref{eq:reordered_coords}--\eqref{eq:m_select} over the horizon.
                    
                        \State Solve the restricted MICP \eqref{eq:restrictedMIMPC}.
                    
                        \State Apply the first action of the returned feasible solution; if no
                        feasible solution is returned, apply the prescribed zero-command fallback.
                \end{algorithmic}
            
            \end{algorithm}

    \section{Numerical Evaluation Setup}
    \label{sec:evalMethodology}

        We consider a nominal \(N=20\) reconfigurable pack, corresponding to the configuration in Fig.~\ref{fig:pack_topology}. Hereafter, RFSR-MI-MPC is referred to as the proposed controller, whereas the full subset-selection implementation of~\eqref{eq:unrestrictedMIMPC} is referred to as the baseline controller.

        \subsection{Simulation, Controller, and Solver Settings}
        \label{subsec:evaluation_protocol_label}

            In the closed-loop simulations, the battery-pack states are propagated using the electrothermal model in Section~\ref{subsec:electhermBattModel}, and both controllers solve the mixed-integer convex reformulation described in Section~\ref{subsec:micp}. The baseline and proposed controllers are evaluated under identical numerical and solver settings. Specifically, both controllers are implemented in MATLAB using YALMIP, and the resulting optimization problems are solved using Gurobi 13.0.1. Table~\ref{tab:evaluation_protocol} summarizes the nominal simulation, controller, and solver parameters. Complete model-calibration files, OCV data, the processed demand trace, random seeds, and reproduction scripts are provided in~\cite{skegroRankedPrefixArchive}; the simulation result files from which every reported number and figure is regenerated are archived in~\cite{skegroRankedPrefixData}.

            \begin{table}[ht]
                \centering
                \scriptsize
                \caption{Nominal Evaluation Parameters.}
                \label{tab:evaluation_protocol}
                \setlength{\tabcolsep}{2.0pt}
                \begin{tabular}{@{}lclc@{}}
                    \toprule
                    Quantity & Value & Quantity & Value \\
                    \midrule
                    \(N\) & 20
                    & \(N_p\) & 10 \\
                    \(\Delta t\) & \(1~\mathrm{s}\)
                    & Demand & Scaled WLTC Class~3 \\
                    \(\bigl(w_{\mathrm{SOH}},w_{\mathrm{SOC}},w_\lambda\bigr)\)
                    & \((10,4,1000)\)
                    & \(t_{\mathrm{lim}}^{\mathrm{op}}\) & \(1~\mathrm{s}\) \\
                    Relative MIP-gap tolerance & \(10^{-4}\)
                    & CPU & i7-1370P (13th Gen) \\
                    \(v_{\mathrm{cell}}\) & \([2.0,3.60]~\mathrm{V}\)
                    & \(v_{\mathrm{pack}}\) & \([44.8,72.0]~\mathrm{V}\) \\
                    \(i_{\mathrm{pack}}\) & \([-3.0,18.0]~\mathrm{A}\)
                    & \(\mathrm{SOC}_{\mathrm{cell}}\) & \([0.05,1.00]\) \\
                    \(T_{\mathrm{cell}}\) & \([11,44]~{}^{\circ}\mathrm{C}\)
                    & \(\mathrm{SOH}^{\mathrm{EOL}}\) & 0.80 \\
                    \(\delta_P\) & 0.01
                    & \((\Delta_{\mathrm{low}},\Delta_{\mathrm{high}})\)
                    & \((0.03,0.06)\) \\
                    \((z_1,z_2)\) & \((0.25,0.40)\)
                    & \(\sigma_\Delta\) & \(0.005~\mathrm{p.u.}/\mathrm{s}\) \\
                    \(\mathrm{SOC}_i(0)\) & \(\mathcal{N}(0.8,0.001^2)\)
                    & \(\mathrm{SOH}_i(0)\) & \(\mathcal{N}(0.98,0.01^2)\) \\
                    \(T_i(0)\) & \(25~{}^{\circ}\mathrm{C}\)
                    & \(i_{\mathrm{RC},i}(0)\) & \(0~\mathrm{A}\) \\
                    \multicolumn{4}{@{}p{0.47\textwidth}@{}}{\footnotesize $\mathcal{N}(\mu,\sigma^2)$ denotes a Gaussian distribution with mean $\mu$ and variance $\sigma^2$, clipped to the admissible SOC/SOH range.} \\
                    \bottomrule
                \end{tabular}
            \end{table}

            The pack-power request is generated from the WLTC Class~3b speed trace~\cite{UN_GTR15} using a backward-facing quasi-static longitudinal vehicle model~\cite{guzzella2013vehicle}. The model accounts for rolling resistance, aerodynamic drag, vehicle and rotational inertia, drivetrain and regenerative-braking efficiencies, and a constant \(300~\mathrm{W}\) auxiliary load at the unscaled vehicle level. The resulting battery-power trace, including the auxiliary load, is uniformly scaled by \(\kappa_{\mathrm{WLTC}}=0.008122\), yielding a maximum discharge request of \(0.444~\mathrm{kW}\) and a maximum regenerative-charge magnitude of \(0.192~\mathrm{kW}\). The vehicle parameters, processed speed trace, and demand-generation implementation are provided in~\cite{skegroRankedPrefixArchive}.

        \subsection{Reporting Conventions and Certification Metrics}
        \label{subsec:reportingConventions}

            With the simulation environment and solver settings fixed, we next define the reporting metrics. The set of active control steps \(\mathcal K_{\mathrm{act}}\) and its cardinality \(K_{\mathrm{act}}\) are defined as
            \begin{equation}
                \mathcal K_{\mathrm{act}}
                =
                \{k:p_{\mathrm{req}}(k)\neq0\},
                \qquad
                K_{\mathrm{act}}=|\mathcal K_{\mathrm{act}}|.
            \end{equation}
            Unless otherwise stated, all reported statistics are computed over $\mathcal K_{\mathrm{act}}$. For any per-step scalar quantity $X$, $\bar X$ denotes its sample mean, $X_{\alpha}$ its sample $\alpha^{\mathrm{th}}$ percentile, and $X_{\max}$ its maximum over the stated evaluation set; in particular, $X_{50}$ is the sample median. The curtailment rate \(\Phi_{\mathrm{curt}}\) is the percentage of active steps at which the applied solution curtails the power request,
            \begin{equation}
                \Phi_{\mathrm{curt}}
                =
                \frac{100}{K_{\mathrm{act}}}
                \bigl|\{k\in\mathcal K_{\mathrm{act}}:\lambda(k)>\varepsilon_\lambda\}\bigr|
                \quad[\%],
                \label{eq:curtRate}
            \end{equation}
            where $\lambda(k)$ is the applied curtailment slack from~\eqref{eq:lambdaDef} and $\varepsilon_\lambda=0.01~\mathrm{A}$ is the detection threshold.

            To assess solver reliability, for each active step \(k\in\mathcal K_{\mathrm{act}}\) we record whether a feasible action is obtained and whether optimality is certified within the operational time limit \(t_{\mathrm{lim}}^{\mathrm{op}}=1~\mathrm{s}\). We write \(t(k)\) for the solver time and \(n^{\mathrm{node}}(k)\) for the number of branch-and-bound nodes. A step is counted as \emph{root-node} when the solver reports \(n^{\mathrm{node}}(k)\le 1\), that is, when the instance is closed in presolve or at the root relaxation without branching. A step is classified as \textit{Optimal} when the solver certifies optimality to the prescribed relative MIP-gap tolerance of \(10^{-4}\) within the allotted time, \textit{TO+Inc.} when the time limit is reached with a feasible incumbent but no certificate, and \textit{Fail} when no usable solution is returned (reported infeasibility, solver error, or a time-out without an incumbent), in which case the prescribed zero-command fallback of Algorithm~\ref{alg:onlineRankedPrefixMI-MPC} is applied. The certified-optimality rate \(\eta_{\mathrm{opt}}\) is
            \begin{equation}
                \eta_{\mathrm{opt}}
                =
                \frac{
                |\{k\in\mathcal K_{\mathrm{act}}:
                \text{step \(k\) is Optimal}\}|
                }{K_{\mathrm{act}}}.
                \label{eq:etaOptMetric}
            \end{equation}

        \subsection{Same-State Reference-Solve Protocol}
        \label{subsec:sameStateProtocol}

            A direct comparison of the closed-loop objective values of the two controllers does not isolate the objective difference introduced by the ranked-prefix restriction, because the proposed and baseline controllers generally generate different state trajectories. The restriction cost is therefore quantified empirically by solving both formulations from the same pre-control state \(x(k)\), on two independently generated state trajectories, so that the resulting cost estimate does not depend on which trajectory the states were drawn from.

            The first trajectory is generated by the proposed controller under the \(1~\mathrm{s}\) operational solver limit. At each of the 1800 active states visited along this trajectory, the full subset-selection formulation is also solved from the same state \(x(k)\) using a \(100~\mathrm{s}\) reference time limit. The second trajectory is generated independently by a deterministic rule-based policy that cyclically shifts the engagement set by one physical cell at each active discharge step, engaging 17 of the 20 cells; during active charge (regenerative) steps, the policy engages all 20 cells. This policy is memoryless and does not use SOC, SOH, or the suitability score of Section~\ref{subsec:objectiveinformed_cell_ranking}. At each recorded state \(x(k)\) on this trajectory, the ranked-prefix and full subset-selection formulations are solved independently using the \(1~\mathrm{s}\) operational and \(100~\mathrm{s}\) reference limits, respectively; neither optimization affects the trajectory itself. The second trajectory therefore provides an out-of-policy evaluation of the restriction cost, allowing an assessment of whether the performance difference is specific to states induced by the proposed controller.

            For each state sequence, let \(f\in\{\mathrm{full},\mathrm{pref}\}\) denote the formulation type and
            \begin{equation}
                \mathcal K_{\mathrm{opt}}^{f}
                :=
                \left\{
                    k\in\mathcal K_{\mathrm{act}}:
                    \text{\(f\) is classified as \textit{Optimal}}
                \right\},
                \label{eq:KoptDef}
            \end{equation}
            denote the set of active steps at which \(f\) is certified optimal. The set of jointly certified, uncurtailed active steps is then defined as
            \begin{equation}
                \begin{aligned}
                    \mathcal K_{\mathrm{pair}}
                    :=
                    \Bigl\{
                        k\in
                        \mathcal K_{\mathrm{opt}}^{\mathrm{full}}
                        \cap
                        \mathcal K_{\mathrm{opt}}^{\mathrm{pref}}
                        :\;&
                        \lambda_{\mathrm{pref}}(k)\le\varepsilon_\lambda,\\
                        &
                        \lambda_{\mathrm{full}}(k)\le\varepsilon_\lambda
                    \Bigr\}.
                \end{aligned}
                \label{eq:KpairDef}
            \end{equation}
            where \(\lambda_{\mathrm{pref}}\) and \(\lambda_{\mathrm{full}}\) are the curtailment slacks returned by the ranked-prefix and full subset-selection formulations, respectively, when solved from state \(x(k)\); both are bounded above by the detection threshold \(\varepsilon_{\lambda}\). On \(\mathcal K_{\mathrm{pair}}\), the relative restriction gap \(g_{\mathrm{rel}}\) is defined as
            \begin{equation}
                g_{\mathrm{rel}}(k)
                =
                100\,
                \frac{
                J_{\mathrm{pref}}^\star(k)-J_{\mathrm{full}}^\star(k)
                }{
                J_{\mathrm{full}}^\star(k)
                }
                \quad[\%],
                \quad
                k\in\mathcal K_{\mathrm{pair}} .
                \label{eq:g_rel_metric}
            \end{equation}

            To identify the objective terms responsible for the restriction cost, let \(c\) denote an objective component. Let \(J_c^{f,\star}(\ell;k)\) denote the stage cost \(J_c(\ell)\) evaluated along the certified optimal solution of formulation \(f\) initialized from \(x(k)\). Its weighted mean \(\bar J_{c,w}^{\mathrm{SS},f}\), evaluated over the prediction horizons of all paired states, is
            \begin{equation}
                \bar J_{c,w}^{\mathrm{SS},f}
                =
                \frac{1}{|\mathcal K_{\mathrm{pair}}|}
                \sum_{k\in\mathcal K_{\mathrm{pair}}}
                \frac{1}{N_p}
                \sum_{\ell=k+1}^{k+N_p}
                w_c J_c^{f,\star}(\ell;k),
                \label{eq:barJcwSS}
            \end{equation}
            where \(w_c\) is the corresponding objective weight in~\eqref{eq:ContrOBJ}. The restriction-induced change in component \(c\in\{\mathrm{SOH},\mathrm{SOC},\lambda\}\), denoted \(\Delta\bar J_{c,w}^{\mathrm{SS}}\), is
            \begin{equation}
                \Delta\bar J_{c,w}^{\mathrm{SS}}
                :=
                \bar J_{c,w}^{\mathrm{SS},\mathrm{pref}}
                -
                \bar J_{c,w}^{\mathrm{SS},\mathrm{full}} .
                \label{eq:deltaComponentSS}
            \end{equation}
            The mean total restriction cost \(\Delta\bar J_{\mathrm{obj}}^{\mathrm{SS}}\) is calculated by
            \begin{equation}
                \begin{aligned}
                \Delta\bar J_{\mathrm{obj}}^{\mathrm{SS}}
                &:=
                \frac{1}{|\mathcal K_{\mathrm{pair}}|}
                \sum_{k\in\mathcal K_{\mathrm{pair}}}
                \left[
                J_{\mathrm{pref}}^\star(k)
                -
                J_{\mathrm{full}}^\star(k)
                \right] \\
                &=
                \Delta\bar J_{\mathrm{SOH},w}^{\mathrm{SS}}
                +
                \Delta\bar J_{\mathrm{SOC},w}^{\mathrm{SS}}
                +
                \Delta\bar J_{\lambda,w}^{\mathrm{SS}} .
                \end{aligned}
                \label{eq:deltaBarJSS}
            \end{equation}

        \subsection{Closed-Loop Performance Metrics}
        \label{subsec:closedLoopMetrics}

            For \(k\in\mathcal K_{\mathrm{act}}\), the relative power-tracking error \(e\) is
            \begin{equation}
                e(k)
                =
                \frac{
                |p_{\mathrm{pack}}(k)-p_{\mathrm{req}}(k)|
                }{
                |p_{\mathrm{req}}(k)|
                } .
                \label{eq:trackingErrorCL}
            \end{equation}

            Balancing performance is quantified through the cell-to-cell SOC dispersion \(\sigma_{\mathrm{SOC}}\), defined as the population standard deviation of the cell SOC values. Its cycle mean \(\bar{\sigma}_{\mathrm{SOC}}\) is calculated over all simulation steps after the initial condition, including zero-power steps.

            To distinguish improved SOC regulation from simply engaging a larger fraction of the pack, we also quantify cell utilization and switching activity. The mean fraction of cells engaged during active operation \(\bar{\phi}_{\mathrm{eng}}\), calculated to characterize cell utilization, is
            \begin{equation}
                \bar{\phi}_{\mathrm{eng}}
                =
                \frac{1}{N K_{\mathrm{act}}}
                \sum_{k\in\mathcal K_{\mathrm{act}}}
                \sum_{i=1}^{N} S_i(k).
                \label{eq:phiEng}
            \end{equation}
            Switching activity at an active step is calculated by the number of cells \(n_{\mathrm{sw}}\) whose engagement state changes relative to the immediately preceding simulation step,
            \begin{equation}
                n_{\mathrm{sw}}(k)
                =
                \sum_{i=1}^{N}
                \left|S_i(k)-S_i(k-1)\right|,
                \quad k\in\mathcal K_{\mathrm{act}}\setminus\{1\}.
                \label{eq:nsw}
            \end{equation}
            Step \(k = 1\) is excluded because no prior engagement command \(S_i(0)\) exists at the start of the simulation.

            To summarize the finite-horizon solutions returned during closed-loop operation, let
            \begin{equation}
                \mathcal K_{\mathrm{usable}}
                :=
                \{k\in\mathcal K_{\mathrm{act}}:
                \text{step $k$ is \textit{Optimal} or \textit{TO+Inc.}}\}
                \label{eq:KusableDef}
            \end{equation}
            denote the set of usable closed-loop steps, that is, the active steps excluding \textit{Fail} steps, at which no weighted finite-horizon objective is defined. The mean weighted contribution of objective component \(c\) over these steps is
            \begin{equation}
                \bar J_{c,w}^{\mathrm{CL}}
                =
                \frac{1}{|\mathcal K_{\mathrm{usable}}|}
                \sum_{k\in\mathcal K_{\mathrm{usable}}}
                \frac{1}{N_p}
                \sum_{\ell=k+1}^{k+N_p}
                w_c J_c^{\mathrm{ret}}(\ell;k),
                \label{eq:barJclosedLoop}
            \end{equation}
            where \(J_c^{\mathrm{ret}}(\ell;k)\) denotes stage-cost component \(J_c\) at prediction stage \(\ell\), evaluated along the finite-horizon solution returned by the online solver at sampling instant \(k\). At an \textit{Optimal} step, the returned solution is the certified optimum; at a \textit{TO+Inc.} step, it is the feasible incumbent available when the operational time limit is reached. The corresponding mean returned objective is
            \begin{equation}
                \bar J_{\mathrm{obj}}^{\mathrm{CL}}
                =
                \bar J_{\mathrm{SOH},w}^{\mathrm{CL}}
                +
                \bar J_{\mathrm{SOC},w}^{\mathrm{CL}}
                +
                \bar J_{\lambda,w}^{\mathrm{CL}} .
                \label{eq:barJobjCL}
            \end{equation}
            Only the first action of each returned horizon solution is applied. Hence, \(\bar J_{\mathrm{obj}}^{\mathrm{CL}}\) summarizes the finite-horizon objective values returned online along a controller's closed-loop trajectory; it is not a realized cycle cost, and it does not isolate the objective loss caused by the ranked-prefix restriction. That quantity is assessed by the same-state protocol of Section~\ref{subsec:sameStateProtocol}, where both formulations are initialized from the same state and compared only when both optima are certified.

        \subsection{Monte Carlo and Robustness Sweep Design}
        \label{subsec:mcRobustnessDesign}

            \paragraph{Initial-condition Monte Carlo design}
                To assess dependence on the initial cell states, we perform \(N_{\mathrm{MC}}=20\) paired WLTC simulations with independently sampled initial SOC and SOH values from the distributions in Table~\ref{tab:evaluation_protocol}. Within each realization, the baseline and proposed controllers start from the same sampled pack state. For a performance metric \(M\), let
                \begin{equation}
                    d_h(M) = M_{\mathrm{prop},h} - M_{\mathrm{base},h},
                    \quad
                    h=1,\ldots,N_{\mathrm{MC}},
                \end{equation}
                denote the paired difference for realization \(h\). We report the median value of each metric across the 20 realizations for each controller, together with the mean paired difference
                \begin{equation}
                    \overline{\Delta M}
                    =
                    \frac{1}{N_{\mathrm{MC}}}
                    \sum_{h=1}^{N_{\mathrm{MC}}} d_h(M).
                    \label{eq:mcMeanPairedDifference}
                \end{equation}
                The corresponding \(95\%\) confidence interval is the percentile-bootstrap interval for \(\overline{\Delta M}\), obtained from \(B=10{,}000\) resamples of the 20 paired differences. Thus, the difference of the two medians need not equal \(\overline{\Delta M}\).

            \paragraph{Heterogeneity and demand sweep design}
                To test whether the restriction remains effective as the cells become less similar and as the demand departs from the WLTC profile, we evaluate four scenarios at \(N=20\) and \(N_p=10\). Table~\ref{tab:robustness_case_params} lists the SOC- and SOH-spread values used in each case.
                \begin{table}[!ht]
                \centering
                \footnotesize
                \caption{Initial-Dispersion Parameters for the Heterogeneity/Demand Sweep.}
                \label{tab:robustness_case_params}
                \begin{tabular}{lccc}
                \hline\hline
                Case & $\sigma_{\mathrm{SOC},0}$ [\%] & $\sigma_{\mathrm{SOH},0}$ [\%] & Demand \\
                \hline
                WLTC Low    & 0.05 & 0.5 & WLTC \\
                WLTC Med    & 0.10 & 1.0 & WLTC \\
                WLTC High   & 1.00 & 1.5 & WLTC \\
                High Power   & 0.10 & 0.5 & 200~W const. \\
                \hline\hline
                \end{tabular}
                \end{table}
                The three WLTC cases use the WLTC demand profile at low, medium, and high initial dispersion, respectively. The High Power case uses the medium SOC-spread level \(0.10\%\) together with the low SOH-spread level \(0.5\%\) and replaces the WLTC sequence by a constant \(200~\mathrm{W}\) discharge request. This case requires a nontrivial engagement decision at every active step and separates the effect of the restriction from the temporal variability of the driving cycle.

            \paragraph{Pack-size sweep design}
                To test whether the computational advantage persists as the number of cell-engagement decisions increases, we vary the pack size over \(N\in\{16,20,24\}\), while holding the prediction horizon, objective weights, solver settings, and initial-state distribution fixed. The WLTC power profile is rescaled for each pack size by scaling \(\kappa_{\mathrm{WLTC}}\) (Section~\ref{subsec:evaluation_protocol_label}) proportionally to \(N\), resulting in scaled peak discharge powers of \(355\), \(444\), and \(533~\mathrm{W}\) for \(N=16\), \(20\), and \(24\), respectively. This scaling keeps the per-cell C-rate and duty cycle comparable across pack sizes, so that the sweep isolates the effect of the number of binary engagement decisions under comparable per-cell operating conditions.

            \paragraph{Prediction-horizon sweep design}
                To test whether the ranked-prefix restriction changes the usual tradeoff between horizon length and online computational burden, we vary the horizon over \(N_p\in\{4,6,8,10\}\) with \(N=20\), the medium-heterogeneity WLTC case, and otherwise unchanged controller and solver settings.

    \section{Results and Discussion}
    \label{sec:results}

        This section evaluates the proposed controller in four stages. We first determine whether the ranked-prefix restriction improves real-time certification under the nominal WLTC cycle. We then quantify the approximation cost introduced by the restriction through same-state reference solves, examine how the resulting controller behaves when deployed in closed loop, and finally test whether the computational and control benefits persist across variations in initial conditions, cell heterogeneity, demand, pack size, and prediction horizon.

        \subsection{Nominal Driving-Cycle Evaluation}
        \label{sec:results_nominal}
        
            The nominal WLTC case is used to examine three distinct aspects of controller performance that should be considered separately: whether a solution can be certified within the available sampling interval, the finite-horizon optimality loss introduced by restricting the engagement set, and the resulting closed-loop behavior of the proposed controller.

            \subsubsection{Solver Certification and Runtime}
            \label{sec:vb1_runtime}
                
                We begin with the computational requirement, since real-time feasibility is a prerequisite for meaningful closed-loop deployment. Table~\ref{tab:runtime_timing} and Fig.~\ref{fig:VB1_solve_time_cdf} compare the certification and runtime behavior of the baseline and proposed controllers over the nominal WLTC cycle.
                
                \begin{table}[!ht]
                    \centering
                    \footnotesize
                    \caption{Solver-Time And Branch-and-Bound Statistics For The Nominal WLTC Cycle.}
                    \label{tab:runtime_timing}
                    \setlength{\tabcolsep}{2pt}
                    \begin{tabular}{lccccccc}
                    \hline\hline
                    Controller & $t_{50}$ [ms] & $t_{95}$ [ms] & $t_{\max}$ [ms] & Root [\%]
                         & $n^{\mathrm{node}}_{50}$ & $n^{\mathrm{node}}_{95}$ &
                         $n^{\mathrm{node}}_{\max}$ \\
                    \hline
                    Baseline & 310.5 & 1012.0 & 1289.0 &  85.0 & 1 & 289 & 1089 \\
                    Proposed & 106.0 &  148.0 &  211.0 & 100.0 & 1 &   1 &    1 \\
                    \hline
                    Speedup & $2.9{\times}$ & $6.8{\times}$ & $6.1{\times}$ & — & — & — & —\\
                    \hline\hline
                    \end{tabular}
                \end{table}                 
                
                \begin{figure}[!ht]
                    \centering
                    \includegraphics[width=\linewidth]{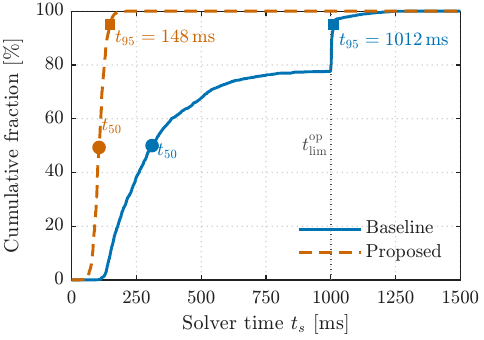}
                    \caption{Empirical cumulative distribution function of per-step solver time over the nominal WLTC cycle. The dashed line marks the operational solver time limit \(t_{\mathrm{lim}}^{\mathrm{op}}=1~\mathrm{s}\).}
                    \label{fig:VB1_solve_time_cdf}
                \end{figure}         

                The computational difference is concentrated in the upper tail of the runtime distribution. Both controllers resolve at least half of the instances at the root node, but the baseline requires branch-and-bound search on a subset of steps, reaching \(289\) nodes at the \(95^{\mathrm{th}}\) percentile and \(1089\) nodes in the worst case. In contrast, every proposed-controller instance terminates without branching. Consequently, the proposed controller reduces \(t_{95}\) by a factor of \(6.8\), compared with a \(2.9\times\) reduction in median solver time. Because the median runtime is below the operational limit for both controllers, the reduction in the upper tail is the more relevant improvement for real-time operation. Fig.~\ref{fig:VB1_solve_time_cdf} illustrates this distinction through the markedly narrower runtime distribution of the proposed controller.

                The baseline fails to certify optimality on \(22.39\%\) of the active steps, whereas the proposed controller certifies every active step. Neither controller encounters a \textit{Fail} event, and most baseline timeouts return low-gap feasible incumbents, with a median MIP gap of \(1.230\%\). Thus, the principal computational limitation of the full subset-selection formulation is not obtaining a feasible control action, but certifying its optimality within the prescribed computation time.
            
                Having established that the missed deadlines are primarily a certification problem rather than a failure to obtain feasible control, we next examine when those difficult solves arise. Non-certified baseline steps occur at substantially higher absolute power requests than certified steps, indicating that the unrestricted subset-selection problem becomes more difficult under demanding power conditions.
                \begin{figure}[!ht]
                    \centering
                    \includegraphics[width=\linewidth]{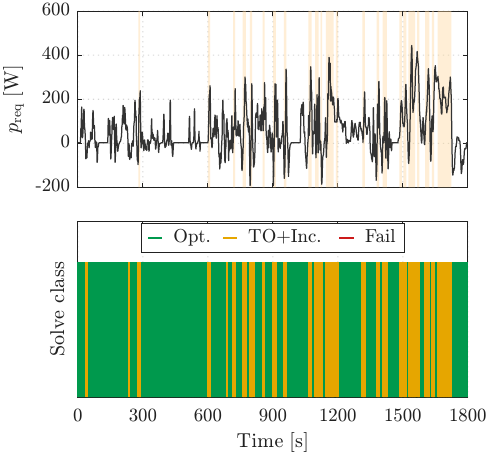}
                    \caption{Requested power and baseline-controller solve class over the nominal WLTC cycle.}
                    \label{fig:VB1_solveclass_timeline}
                \end{figure}       

                The non-certified steps are strongly clustered in time. As Fig.~\ref{fig:VB1_solveclass_timeline} shows, \(23\) of the \(30\) maximal non-certified bursts last at least five consecutive active steps, and the longest extends over \(64\) steps. This persistence is consistent with the slow evolution of neighboring MPC instances: the requested power is strongly correlated between successive samples, while the cell SOC states change only slightly over one sampling interval. Consequently, difficult operating conditions can remain difficult over multiple control updates, so a fallback strategy designed primarily for isolated deadline overruns would not eliminate sustained periods of uncertified operation.

                The additional online processing introduced by the restriction is negligible relative to the solver-time reduction. Score evaluation, sorting, and parameter construction require \(0.6~\mathrm{ms}\) at the \(95^{\mathrm{th}}\) percentile. Measured end to end, the complete proposed-controller update occupies approximately \(40\%\) of the sampling period at the \(95^{\mathrm{th}}\) percentile and \(51\%\) in the worst case. The computational advantage therefore remains substantial after accounting for ranking and problem construction, while retaining timing margin within the \(1~\mathrm{s}\) control interval.

            \subsubsection{Same-State Approximation Cost}
            \label{sec:per_step_quality}
            
                To quantify the finite-horizon approximation cost introduced by excluding non-prefix subsets, we solve the ranked-prefix and full subset-selection formulations from identical pre-control states. Table~\ref{tab:same_state_summary} reports the resulting differences on the proposed-controller and rule-based-driver trajectories.

                \begin{table}[!ht]
                    \centering
                    \footnotesize
                    \caption{Same-State Approximation Cost on Jointly Certified, Uncurtailed Steps \(\mathcal K_{\mathrm{pair}}\).}
                    \label{tab:same_state_summary}
                    \setlength{\tabcolsep}{3pt}
                    \begin{tabular}{lcc}
                    \hline\hline
                    Metric
                    & \shortstack{Proposed-controller\\trajectory}
                    & \shortstack{Rule-based-driver\\trajectory} \\
                    \hline
                    \(|\mathcal K_{\mathrm{pair}}|\) (of 1800)
                    & 1649 & 1650 \\

                    \(\Delta\bar J_{\mathrm{obj}}^{\mathrm{SS}}\)
                    & 0.0059 & 0.0032 \\

                    \(\Delta\bar J_{\mathrm{SOH},w}^{\mathrm{SS}}\)
                    & \(+0.0068\) & \(+0.0034\) \\

                    \(\Delta\bar J_{\mathrm{SOC},w}^{\mathrm{SS}}\)
                    & \(-0.0009\) & \(-0.0002\) \\

                    \(\bar g_{\mathrm{rel}}\) [\%]
                    & 0.102 & 0.056 \\

                    \(g_{\mathrm{rel},50}\) [\%]
                    & 0.000 & 0.000 \\

                    \(g_{\mathrm{rel},95}\) [\%]
                    & 0.629 & 0.259 \\

                    \(g_{\mathrm{rel},\max}\) [\%]
                    & 0.948 & 0.458 \\
                    \hline\hline
                    \end{tabular}
                \end{table}

                The median restriction gap is zero to the reported precision on both evaluated trajectories: on at least half of the paired states the two formulations attain optimal values differing by less than the \(10^{-4}\) relative MIP-gap tolerance to which both are solved. This is consistent with Proposition~\ref{prop:exactness}: at those states the ranked-prefix feasible set attains the full subset-selection optimum to solver accuracy. The same result on the independently generated rule-based-driver trajectory shows that this behavior is not confined to states induced by the proposed controller.

                Where the restriction changes the optimum, the objective loss remains small. The higher SOH-weighted contribution is partly offset by a lower SOC-weighted contribution. Neither formulation curtails power on the paired states, so \(\Delta\bar J_{\lambda,w}^{\mathrm{SS}}=0\). Thus, the restriction produces a small redistribution of the SOH--SOC tradeoff without affecting power delivery in these comparisons.

                Nevertheless, the baseline controller remains uncertified at \(151\) of the \(1800\) active states on the proposed-controller trajectory and \(150\) of the \(1800\) states on the rule-based-driver trajectory, even with the \(100~\mathrm{s}\) reference limit. The restriction-gap statistics therefore apply only to the jointly certified states in \(\mathcal K_{\mathrm{pair}}\).

            \subsubsection{Closed-Loop Control Performance}
            \label{sec:closed_loop}

                We next examine how the proposed controller performs when deployed in closed loop under the operational solver limit. Table~\ref{tab:closed_loop_summary} compares the resulting closed-loop performance of the baseline and proposed controllers.
                
                \begin{table}[!ht]
                    \centering
                    \footnotesize
                    \caption{Closed-Loop Performance Over The Nominal WLTC Cycle.}
                    \label{tab:closed_loop_summary}
                    \setlength{\tabcolsep}{4pt}
                    \begin{tabular}{lcc}
                    \hline\hline
                    Metric & Baseline & Proposed \\
                    \hline
                    \multicolumn{3}{l}{\textit{Closed-loop online objective}}\\
                    \(\bar J_{\mathrm{obj}}^{\mathrm{CL}}\) [--]
                    & 5.8521 & 5.7704 \\
    
                    \(\bar J_{\mathrm{SOH},w}^{\mathrm{CL}}\) [--]
                    & 5.7656 & 5.7627 \\
    
                    \(\bar J_{\mathrm{SOC},w}^{\mathrm{CL}}\) [--]
                    & 0.0837 & 0.0077 \\
    
                    \(\bar J_{\lambda,w}^{\mathrm{CL}}\) [--]
                    & 0.0029 & 0.0000 \\
    
                    \hline
                    \multicolumn{3}{l}{\textit{Curtailment}}\\
                    \(\Phi_{\mathrm{curt}}\) [\%]
                    & 0.278 & 0.000 \\
    
                    \hline
                    \multicolumn{3}{l}{\textit{Cell-to-cell SOC regulation}}\\
                    \(\bar{\sigma}_{\mathrm{SOC}}\) [\%]
                    & 0.8099 & 0.2403 \\
    
                    \(\sigma_{\mathrm{SOC},95}\) [\%]
                    & 1.2578 & 0.4322 \\
    
                    \(\sigma_{\mathrm{SOC},\max}\) [\%]
                    & 1.3702 & 0.5150 \\
    
                    \hline
                    \multicolumn{3}{l}{\textit{Power-tracking error}}\\
                    \(e_{50}\) [\%]
                    & 0.0154 & 0.0157 \\
    
                    \(e_{99}\) [\%]
                    & 0.0291 & 0.0288 \\
    
                    \(e_{\max}\) [\%]
                    & 1.0043 & 0.0421 \\
    
                    \hline
                    \multicolumn{3}{l}{\textit{Cell usage and switching}}\\
                    \(\bar{\phi}_{\mathrm{eng}}\) [--]
                    & 0.708 & 0.707 \\
    
                    \(\bar n_{\mathrm{sw}}\) [--]
                    & 1.81 & 2.68 \\
    
                    \(n_{\mathrm{sw},95}\) [--]
                    & 10 & 11 \\
                    \hline\hline
                    \end{tabular}
                \end{table}

                The dominant improvement is in cell-to-cell SOC regulation. The cycle-mean SOC dispersion decreases by a factor of \(3.4\), while the \(95^{\mathrm{th}}\)-percentile and maximum dispersions decrease by factors of \(2.9\) and \(2.7\), respectively. The SOH-weighted objective contribution changes by less than \(0.1\%\), indicating that the improved balancing is obtained without a material change in aggregate SOH-related cost.

                Because the two controllers engage essentially the same fraction of cells on average, the balancing improvement arises from how engagement is distributed among cells as their relative ranking evolves. Figure~\ref{fig:VB4_engagement_heatmap} illustrates this redistribution over the drive cycle.
                
                \begin{figure}[!t]
                    \centering
                    \includegraphics[width=\linewidth]
                    {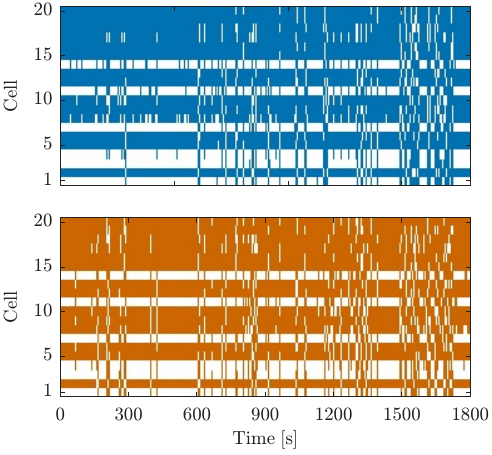}
                    \caption{Cell engagement over the nominal WLTC cycle for the baseline (top) and proposed (bottom) controllers.}
                    \label{fig:VB4_engagement_heatmap}
                \end{figure}

                Up to the \(99^{\mathrm{th}}\) percentile, the controllers have nearly identical relative tracking errors of approximately \(0.03\%\). The difference is confined to the upper tail: the baseline exhibits five detected curtailment steps and a maximum tracking error of \(1.0043\%\), whereas the proposed controller exhibits no detected curtailment and a maximum error of \(0.0421\%\). Thus, the tracking improvement is primarily the suppression of rare large deviations.
                
                The improved SOC redistribution is accompanied by increased switching activity. The mean number of changed cell states rises by \(47.9\%\), from \(1.81\) to \(2.68\), whereas the \(95^{\mathrm{th}}\) percentile increases only from \(10\) to \(11\). The increase therefore reflects more frequent routine reconfiguration, with little change in the magnitude of the largest switching events.

        \subsection{Robustness and Sensitivity Studies}
        \label{sec:robustness}
            
            We next examine whether the nominal findings persist under variations in initial conditions, cell heterogeneity, operating demand, pack size, and prediction horizon.
            
            \subsubsection{Sensitivity to Initial Conditions}
            \label{sec:VC1}
                
                We first assess sensitivity to the initial conditions. Table~\ref{tab:mc_bootstrap} reports the paired Monte Carlo results over \(20\) independently sampled initial-condition realizations.
                
                \begin{table}[!ht]
                    \centering
                    \footnotesize
                    \caption{Paired Monte Carlo Results Over 20 Initial-Condition Realizations.}
                    \label{tab:mc_bootstrap}
                    \setlength{\tabcolsep}{3pt}
                    \begin{tabular}{lcccc}
                    \hline\hline
                    Metric
                    & Baseline
                    & Proposed
                    & \shortstack{Mean paired\\difference}
                    & \shortstack{95\% bootstrap\\CI} \\
                    \hline
                    $\bar{e}$ [\%]
                    & 0.404
                    & 0.015
                    & $-0.343$
                    & $[-0.393\,,-0.293]$ \\                
                    $\bar{\sigma}_{\mathrm{SOC}}$ [\%]
                    & 0.769
                    & 0.213
                    & $-0.546$
                    & $[-0.565,\,-0.525]$ \\
                
                    $t_{95}$ [ms]
                    & 1002.0
                    & 117.7
                    & $-883.7$
                    & $[-885.3,\,-881.6]$ \\
                
                    $\eta_{\mathrm{opt}}$ [\%]
                    & 76.2
                    & 100.0
                    & $+24.0$
                    & $[+23.5,\,+24.7]$ \\
                    \hline\hline
                \end{tabular}
                
                \end{table}
            
                Across the sampled initial conditions, the proposed controller maintains lower tracking error and SOC dispersion, a substantially lower runtime tail, and a \(100\%\) certified-optimality rate. The \(95\%\) bootstrap confidence intervals for all reported paired differences exclude zero, indicating that the differences are systematic across the sampled realizations and are not attributable to a small number of cases.

            \subsubsection{Sensitivity to Cell Heterogeneity and Demand Conditions}
            \label{sec:VC2}
                
                We next vary cell heterogeneity and operating demand while holding \(N=20\), \(N_p=10\), and the remaining controller and solver settings fixed. Table~\ref{tab:robustness_sweeps} summarizes these cases together with the pack-size and prediction-horizon sweeps.

                \begin{table}[!ht]
                    \centering
                    \scriptsize
                    \caption{Robustness Results For Heterogeneity, Demand, Pack-Size, and Prediction-Horizon Variations.}
                    \label{tab:robustness_sweeps}
                    \setlength{\tabcolsep}{2.5pt}
                    \begin{tabular}{@{}lllccccc@{}}
                    \hline\hline
                    Sweep & Cond. & Ctrl. & $\bar e$ [\%] & \(\Phi_{\mathrm{curt}}\) [\%]
                        & $\bar\sigma_{\mathrm{SOC}}$ [\%] & $t_{95}$ [ms] & $\eta_{\mathrm{opt}}$ [\%]\\
                    \hline
                    \multirow{8}{*}{\shortstack[l]{Heterog.\\/demand}}
                      & \multirow{2}{*}{WLTC Low}  & Base & 0.186 & 0.167 & 0.469 & 1002.0 & 77.1 \\
                      &                            & Prop & 0.018 & 0.000 & 0.127 & 119.0  & 100.0 \\
                    \cline{2-8}
                      & \multirow{2}{*}{WLTC Med}  & Base & 0.186 & 0.222 & 0.823 & 1002.0 & 76.2 \\
                      &                            & Prop & 0.018 & 0.000 & 0.240 & 117.0  & 100.0 \\
                    \cline{2-8}
                      & \multirow{2}{*}{WLTC High} & Base & 1.630 & 0.222 & 1.252 & 1002.0 & 73.9 \\
                      &                            & Prop & 0.018 & 0.000 & 0.548 & 114.0  & 100.0 \\
                    \cline{2-8}
                      & \multirow{2}{*}{High Power$^\dagger$} & Base & 0.348 & 0.000 & 0.349 & 1007.0 & 26.8 \\
                      &                            & Prop & 0.013 & 0.000 & 0.097 & 135.5  & 100.0 \\
                    \hline
                    \hline                    
                    \multirow{6}{*}{\shortstack[l]{Pack\\size}}
                      & \multirow{2}{*}{$N=16$}    & Base & 0.023 & 0.056 & 0.735 & 1001.0 & 60.1 \\
                      &                            & Prop & 0.023 & 0.000 & 0.224 & 96.0   & 100.0 \\
                    \cline{2-8}
                      & \multirow{2}{*}{$N=20$}    & Base & 0.186 & 0.222 & 0.823 & 1002.0 & 76.2 \\
                      &                            & Prop & 0.018 & 0.000 & 0.240 & 117.0  & 100.0 \\
                    \cline{2-8}
                      & \multirow{2}{*}{$N=24$}    & Base & 0.021 & 0.000 & 0.858 & 1008.0 & 75.3 \\
                      &                            & Prop & 0.022 & 0.000 & 0.221 & 137.0  & 100.0 \\
                    \hline
                    \hline                    
                    \multirow{8}{*}{\shortstack[l]{Horizon}}
                      & \multirow{2}{*}{$N_p=4$}   & Base & 0.017 & 0.000 & 1.388 & 1000.0 & 85.9 \\
                      &                            & Prop & 0.018 & 0.000 & 0.240 & 40.0   & 100.0 \\
                    \cline{2-8}
                      & \multirow{2}{*}{$N_p=6$}   & Base & 0.129 & 0.111 & 1.171 & 1001.0 & 84.3 \\
                      &                            & Prop & 0.018 & 0.000 & 0.240 & 61.0   & 100.0 \\
                    \cline{2-8}
                      & \multirow{2}{*}{$N_p=8$}   & Base & 0.129 & 0.222 & 0.983 & 1001.0 & 81.7 \\
                      &                            & Prop & 0.018 & 0.000 & 0.240 & 93.0   & 100.0 \\
                    \cline{2-8}
                      & \multirow{2}{*}{$N_p=10$}  & Base & 0.186 & 0.222 & 0.823 & 1002.0 & 76.2 \\
                      &                            & Prop & 0.018 & 0.000 & 0.240 & 117.0  & 100.0 \\
                    \hline\hline
                    \end{tabular}
                    \vspace{2pt}
                    \parbox{\linewidth}{%
                        \footnotesize
                        ``Base'' and ``Prop'' denote the baseline and proposed controllers, respectively.\\ $^\dagger$High Power runs $K_{\mathrm{act}} = 600$ active steps, versus 1800 for all WLTC cases.
                    }
                \end{table}     
                
                Increasing the initial cell dispersion increases the residual SOC imbalance for both controllers, confirming that balancing becomes physically more difficult as cell heterogeneity increases. The proposed controller nevertheless maintains a similar runtime tail and certifies every active step across the tested WLTC heterogeneity levels.

                The High Power case produces a different stress condition: the baseline certifies only \(26.8\%\) of the active steps despite the constant demand. Its computational difficulty therefore cannot be attributed solely to temporal variation in the WLTC profile. Under the same condition, the proposed controller remains fully certified and reduces the cycle-mean SOC dispersion from \(0.349\%\) to \(0.097\%\).

            \subsubsection{Sensitivity to Pack Size}
            \label{sec:VC3}
                
                Increasing the pack size directly increases the number of cell-level engagement decisions. As shown in Table~\ref{tab:robustness_sweeps}, the proposed controller's \(t_{95}\) increases from \(96~\mathrm{ms}\) at \(N=16\) to \(137~\mathrm{ms}\) at \(N=24\), while every active step remains certified. The baseline \(t_{95}\), by contrast, remains near the \(1~\mathrm{s}\) operational limit for all tested pack sizes. Thus, the computational advantage of the proposed controller persists as the number of engagement decisions increases over the tested range.

            \subsubsection{Sensitivity to Prediction Horizon}
            \label{sec:VC4}

                Increasing the prediction horizon exposes a computation--performance tradeoff for the baseline controller. From \(N_p=4\) to \(N_p=10\), its cycle-mean SOC dispersion decreases from \(1.388\%\) to \(0.823\%\), but its certified-optimality rate decreases from \(85.9\%\) to \(76.2\%\), while curtailment appears for \(N_p\ge6\). Even at \(N_p=4\), the baseline \(t_{95}\) reaches the operational solver limit.

                For the proposed controller, every active step is certified and no detected curtailment occurs at any tested horizon. Its \(t_{95}\) increases from \(40~\mathrm{ms}\) at \(N_p=4\) to \(117~\mathrm{ms}\) at \(N_p=10\), while the cycle-mean SOC dispersion remains \(0.240\%\) to the displayed precision. Under the tested operating conditions, extending the proposed controller's horizon beyond \(N_p=4\) therefore provides no improvement in closed-loop SOC balancing.

                At \(N_p=4\), the proposed controller achieves lower cycle-mean SOC dispersion than the baseline controller at \(N_p=10\), while certifying every active step and exhibiting a \(95^{\mathrm{th}}\)-percentile solver time about \(25\) times lower. 

    \section{Conclusion}
    \label{sec:conclusion}
        This paper proposes RFSR-MI-MPC, a ranking-based feasible-set restriction for real-time MI-MPC, and instantiates it for cell-bypass reconfigurable battery packs. The resulting certificates apply to the ranked-prefix formulation, not to excluded non-prefix subsets. Moreover, the formulation remains compatible with standard mixed-integer convex solvers and requires no offline training, custom branching, or solver modification.

        In the nominal 20-cell WLTC case, the restriction increases the certified-optimality rate from \(77.61\%\) to \(100\%\), reduces the 95th-percentile solver time by a factor of \(6.8\), and decreases the cycle-mean cell-to-cell SOC standard deviation by a factor of \(3.4\) without power curtailment. Same-state comparisons yield mean restriction gaps below \(0.11\%\) on jointly certified steps, while the tested initial-condition, heterogeneity, demand, pack-size, and horizon variations retained the computational benefit without retuning. Future work should address hardware validation, switching penalties or dwell-time constraints, robustness to SOC and SOH estimation errors, and the extension of RFSR-MI-MPC to other structured MI-MPC problems.

    \appendix
        
        \section{Detailed Mixed-Integer Convex Reformulation}
        \label{app:micp}
    
            This appendix gives the algebraic mixed-integer convex reformulations referenced in Section~\ref{subsec:micp}. Unless stated otherwise, all constraints hold for \(i\in\mathcal C\) and \(\ell\in\mathcal H_k\).
    
            \subsection{Binary--Continuous Products}
    
                The step-dependent current bounds used in the product reformulations are
                \begin{subequations}
                \label{eq:packCurrentBounds}
                \begin{align}
                \bar{i}_{\mathrm{pack}}(\ell)
                &=
                \begin{cases}
                \min\left(
                \dfrac{p_{\mathrm{req}}(\ell)}{v_{\mathrm{pack}}^{\min}},\,
                i_{\mathrm{pack}}^{\max}
                \right)+\varepsilon,
                & p_{\mathrm{req}}(\ell)>0,\\[2ex]
                0,
                & p_{\mathrm{req}}(\ell)\le 0,
                \end{cases}\\[1ex]
                \underline{i}_{\mathrm{pack}}(\ell)
                &=
                \begin{cases}
                0,
                & p_{\mathrm{req}}(\ell)\ge 0,\\[2ex]
                \max\left(
                \dfrac{p_{\mathrm{req}}(\ell)}{v_{\mathrm{pack}}^{\min}},\,
                i_{\mathrm{pack}}^{\min}
                \right)-\varepsilon,
                & p_{\mathrm{req}}(\ell)<0,
                \end{cases}
                \end{align}
                \end{subequations}
                where \(\varepsilon>0\) is a small numerical safety margin.
    
                Each binary--continuous product \(w=b\xi\) with \(b\in\{0,1\}\) and \(\xi\in[\underline{\xi},\bar{\xi}]\) is represented exactly by its standard convex-hull linearization~\cite{bemporad1999control,williams2013model}. For the cell-current product, the constraints are
                \begin{subequations}
                \label{eq:currentLinearization}
                \begin{align}
                i_i(\ell)
                &\ge \underline{i}_{\mathrm{pack}}(\ell)S_i(\ell),\\
                i_i(\ell)
                &\le \bar{i}_{\mathrm{pack}}(\ell)S_i(\ell),\\
                i_i(\ell)
                &\ge i_{\mathrm{pack}}(\ell)
                -\bar{i}_{\mathrm{pack}}(\ell)\bigl(1-S_i(\ell)\bigr),\\
                i_i(\ell)
                &\le i_{\mathrm{pack}}(\ell)
                -\underline{i}_{\mathrm{pack}}(\ell)\bigl(1-S_i(\ell)\bigr).
                \end{align}
                \end{subequations}
                For the engaged-voltage product, the constraints are
                \begin{subequations}
                \label{eq:voltageMcCormick}
                \begin{align}
                v_{\mathrm{eng},i}(\ell)
                &\ge v_{\mathrm{cell}}^{\min}S_i(\ell),\\
                v_{\mathrm{eng},i}(\ell)
                &\le v_{\mathrm{cell}}^{\max}S_i(\ell),\\
                v_{\mathrm{eng},i}(\ell)
                &\ge v_i(\ell)
                -v_{\mathrm{cell}}^{\max}\bigl(1-S_i(\ell)\bigr),\\
                v_{\mathrm{eng},i}(\ell)
                &\le v_i(\ell)
                -v_{\mathrm{cell}}^{\min}\bigl(1-S_i(\ell)\bigr).
                \end{align}
                \end{subequations}
                These constraints are exact representations of \eqref{eq:cellCurrent} and \eqref{eq:engagedCellVoltage} for the stated bounds.
    
            \subsection{Quadratic Current Terms}
    
                The squared pack current is bounded below by the epigraph variable \(q_{\mathrm{pack}}\):
                \begin{subequations}
                    \label{eq:currentSquareEpi}
                    \begin{align}
                    q_{\mathrm{pack}}(\ell)
                    &\ge i_{\mathrm{pack}}^2(\ell),
                    \label{eq:currentSquare}\\
                    0 &\le q_{\mathrm{pack}}(\ell) \le \bar q_{\mathrm{pack}}(\ell),\\
                    \bar q_{\mathrm{pack}}(\ell)
                    &=
                    \begin{cases}
                    \bar{i}_{\mathrm{pack}}^2(\ell), & p_{\mathrm{req}}(\ell)>0,\\
                    \underline{i}_{\mathrm{pack}}^2(\ell), & p_{\mathrm{req}}(\ell)<0,\\
                    0, & p_{\mathrm{req}}(\ell)=0.
                    \end{cases}
                    \end{align}
                \end{subequations}
                Constraint~\eqref{eq:currentSquare} is second-order-cone representable. Since \(S_i^2=S_i\), the ohmic loss \(S_i i_{\mathrm{pack}}^2\) is represented by \(q_{\mathrm{I},i} = S_i q_{\mathrm{pack}}\). Because \(q_{\mathrm{pack}}\in[0,\bar q_{\mathrm{pack}}]\), its exact convex-hull formulation is
                \begin{subequations}
                \label{eq:heatMcCormick}
                \begin{align}
                0
                &\le q_{\mathrm{I},i}(\ell)
                \le \bar q_{\mathrm{pack}}(\ell)S_i(\ell),\\
                q_{\mathrm{I},i}(\ell)
                &\le q_{\mathrm{pack}}(\ell),\\
                q_{\mathrm{I},i}(\ell)
                &\ge q_{\mathrm{pack}}(\ell)
                -\bar q_{\mathrm{pack}}(\ell)\bigl(1-S_i(\ell)\bigr).
                \end{align}
                \end{subequations}

                The heat-generation term used in the optimizer is
                \begin{equation}
                \dot Q_{\mathrm{cvx},i}^{\mathrm{gen}}(\ell)
                =
                R_0\,q_{\mathrm{I},i}(\ell)
                +
                r_s\,q_{\mathrm{pack}}(\ell),
                \label{eq:heatGenConvex}
                \end{equation}
                which replaces \(\dot Q_i^{\mathrm{gen}}(\ell)\) in the thermal state update~\eqref{eq:cellDynamics_Thermal}. The polarization loss \(R_1 i_{\mathrm{RC},i}^{2}\) of~\eqref{eq:qi_gen} is neglected in the prediction model; the plant simulation uses the full physical expression~\eqref{eq:qi_gen}.           
    
            \subsection{SOS2 Approximation of Inverse Pack Voltage}
        
                The curtailment definition~\eqref{eq:lambdaDef} contains
                \(1/v_{\mathrm{pack}}(\ell)\). Let \(\{(\nu_j,r_j)\}_{j=1}^{N_v}\) be
                breakpoints with
                \begin{equation}
                \nu_j\in[v_{\mathrm{pack}}^{\min},v_{\mathrm{pack}}^{\max}],
                \qquad
                r_j=\frac{1}{\nu_j},
                \qquad j=1,\ldots,N_v.
                \end{equation}
                The breakpoints are placed uniformly in \(1/v\): the reciprocals
                \(r_j=1/\nu_j\) are equally spaced on
                \([1/v_{\mathrm{pack}}^{\max},\,1/v_{\mathrm{pack}}^{\min}]\),
                giving spacings in \(v\) that are non-uniform, ranging from
                \(1.34\,\mathrm{V}\) near \(v_{\mathrm{pack}}^{\min}\) to
                \(3.21\,\mathrm{V}\) near \(v_{\mathrm{pack}}^{\max}\).
                The number of breakpoints \(N_v\) is selected automatically as the
                smallest integer satisfying
                \begin{equation}
                \max_{v\in[v_{\mathrm{pack}}^{\min},v_{\mathrm{pack}}^{\max}]}
                v^2\,\bigl|\delta r(v)\bigr|
                \;\le\;
                N\,\varepsilon_{\mathrm{inv}},
                \label{eq:NvCriterion}
                \end{equation}
                where \(\widehat r(v)\) is the piecewise-linear interpolant of \(1/v\) through the breakpoints \(\{(\nu_j,r_j)\}_{j=1}^{N_v}\), evaluated at a general voltage \(v\) (distinct from its instance \(\widehat r_{\mathrm{pack}}(\ell)\) evaluated at the decision variable \(v_{\mathrm{pack}}(\ell)\) in~\eqref{eq:sos2ConvexComb}); \(\delta r(v)=\widehat r(v)-1/v\) is the resulting piecewise-linear interpolation error and \(\varepsilon_{\mathrm{inv}}=2\,\mathrm{mV}\) is the per-cell inverse-voltage error budget.
         
                The SOS2 interpolation weights \(\theta_j \ge 0\) satisfy
                \begin{subequations}
                    \label{eq:sos2ConvexComb}
                    \begin{align}
                        v_{\mathrm{pack}}(\ell)
                        &= \sum_{j=1}^{N_v}\nu_j\,\theta_j(\ell),\\
                        \widehat r_{\mathrm{pack}}(\ell)
                        &= \sum_{j=1}^{N_v}r_j\,\theta_j(\ell),\\
                        \sum_{j=1}^{N_v}\theta_j(\ell) &= 1,
                    \end{align}
                \end{subequations}
                with an SOS2 adjacency condition on \(\{\theta_j(\ell)\}\). 
                The
                approximation
                \begin{equation}
                \frac{1}{v_{\mathrm{pack}}(\ell)}\approx\widehat r_{\mathrm{pack}}(\ell)
                \label{eq:sos2Appx}
                \end{equation}
                yields the curtailment slack used in the optimizer:
                \begin{equation}
                \lambda(\ell)=
                \begin{cases}
                \dfrac{
                p_{\mathrm{req}}(\ell)\,\widehat r_{\mathrm{pack}}(\ell)
                -i_{\mathrm{pack}}(\ell)
                }{
                \operatorname{sgn}\bigl(p_{\mathrm{req}}(\ell)\bigr)
                },
                & p_{\mathrm{req}}(\ell)\neq 0,\\[1.5ex]
                0,
                & p_{\mathrm{req}}(\ell)=0.
                \end{cases}
                \label{eq:lambdaConvex}
                \end{equation}
         
                For the nominal bounds (\(v_{\mathrm{pack}}\in[44.8,72.0]\,\mathrm{V}\), \(N=20\), \(\varepsilon_{\mathrm{inv}}=2\,\mathrm{mV}\)), criterion~\eqref{eq:NvCriterion} yields \(N_v=14\) breakpoints and a worst-case point-wise inverse-voltage error \(\varepsilon_{\mathrm{SOS2}}=7.40\times10^{-6}\,\mathrm{V}^{-1}\). The implied current-equivalent curtailment error satisfies \(|\Delta\lambda(\ell)|\le|p_{\mathrm{req}}(\ell)|\,\varepsilon_{\mathrm{SOS2}} \le 3.3\times10^{-3}\,\mathrm{A}\) at the peak request \(444\,\mathrm{W}\), below the detection threshold \(\varepsilon_\lambda=0.01\,\mathrm{A}\).

            \subsection{Terminal SOC-Spread Constraint}
    
                The terminal SOC-spread constraint~\eqref{eq:termSOCSpreadConstraint}
                is linearized by two auxiliary scalar variables:
                \begin{subequations}
                \label{eq:terminalSpreadLinear}
                \begin{align}
                \underline{\mathrm{SOC}}_{\mathrm{term}}
                \le \mathrm{SOC}_i(N_p+1)
                &\le\overline{\mathrm{SOC}}_{\mathrm{term}},
                \qquad i=1,\ldots,N,\\
                \overline{\mathrm{SOC}}_{\mathrm{term}}
                -\underline{\mathrm{SOC}}_{\mathrm{term}}
                &\le\Delta \mathrm{SOC}^{\max}(k).
                \end{align}
                \end{subequations}
                This reformulation is linear and introduces no binary variables.
    
                The local affine OCV approximation used in the optimizer is described in Section~\ref{subsec:micp}. The OCV lookup data, segment-selection and extrapolation rules, and fit-error assessment are provided in~\cite{skegroRankedPrefixArchive}, because they are calibration details rather than part of the ranked-prefix contribution.

    \bibliographystyle{IEEEtran}
    \bibliography{IEEEabrv,references}
    
\end{document}